\documentclass[10pt,journal,compsoc]{IEEEtran}

\usepackage{graphicx}
\usepackage{balance}
\usepackage{color}
\usepackage[noend]{algpseudocode}
\usepackage{algorithm}
\usepackage{varwidth}
\usepackage{url}
\usepackage{multirow}
\usepackage{subfigure}
\usepackage{mathtools}
\usepackage{amsmath,bm,amssymb}
\usepackage{hyperref}
\usepackage{times}
\usepackage{diagbox}
\usepackage{pifont}
\usepackage{indentfirst}
\usepackage{verbatim}
\usepackage{cite}
\usepackage{amsthm}
\usepackage[numbers,sort&compress]{natbib}

\newcommand{\tabincell}[2]{\begin{tabular}{@{}#1@{}}#2\end{tabular}}

\newtheorem{definition}{Definition}

\newtheorem{problem}{Problem}
\newtheorem{example}{Example}
\newtheorem{lemma}{Lemma}
\newtheorem{proposition}{Proposition}
\newtheorem{property}{Property}
\setlength{\abovecaptionskip}{0in}
\setlength{\belowcaptionskip}{0.05in}

\begin{document}

\title{Exploring Communities in Large Profiled Graphs}

\author{Yankai Chen, Yixiang Fang, Reynold Cheng~\IEEEmembership{Member,~IEEE}, Yun Li, Xiaojun Chen, Jie Zhang
\IEEEcompsocitemizethanks{
\IEEEcompsocthanksitem Y. Chen, Y. Fang, and R. Cheng are with the Department
of Computer Science, The University of Hong Kong, Hong Kong.
\protect\\
E-mail: \{ykchen, yxfang, ckcheng\}@cs.hku.hk
\IEEEcompsocthanksitem Y. Li is with Department of Computer Science and Technology, Nanjing University, China.
\protect\\
E-mail: liycser@gmail.com
\IEEEcompsocthanksitem X. Chen is with College of Computer Science and Software,Shenzhen University, China.
\protect\\
E-mail: xjchen@szu.edu.cn
\IEEEcompsocthanksitem J. Zhang is with School of Computer Science and Engineering, Nanyang Technological University, Singapore.  
\protect\\
E-mail: zhangj@ntu.edu.sg
}
\thanks{Manuscript received March 20, 2018.}}


\IEEEtitleabstractindextext{%

\begin{abstract}

Given a graph $G$ and a vertex $q\in G$, the community search (CS) problem aims to efficiently find a subgraph of $G$ whose vertices are closely related to $q$.  Communities are prevalent in social and biological networks, and can be used in product advertisement and social event recommendation. In this paper, we study \emph{profiled community search} (PCS), where CS is performed on a \emph{profiled graph}.  This is a graph in which each vertex has labels arranged in a hierarchical manner.  Extensive experiments show that PCS can identify communities with themes that are common to their vertices, and is more effective than existing CS approaches.  As a naive solution for PCS is highly expensive, we have also developed a tree index, which facilitate efficient and online solutions for PCS.

\end{abstract}

\begin{IEEEkeywords}
community search, social networks, graph queries, profiled graph
\end{IEEEkeywords}}
\maketitle

\section{Introduction}
\label{sec:intro}

\IEEEPARstart{D}ue to the recent developments of gigantic social networks (e.g., Flickr, Facebook, and Twitter), topics of graph queries have attracted attention from industry and research areas~\cite{ding2007finding,fang2014detecting,he2007blinks,kacholia2005bidirectional,kargar2011keyword,xu2012model,yu2009keyword}. 
Communities, which are often found in large graphs, can be used in various applications, such as social event setting, friend recommendation, and research collaboration analysis~\cite{sozio2010community,online-sigmod2013,k-truss2014,fang2016effective,huang2017attribute}. Given a graph $G$ and a query vertex $q\in G$, the goal of \emph{community search} (CS) is to extract \emph{communities}, or densely connected subgraphs of $G$ that contain $q$, in an online manner. 

\begin{figure}[htp]
    \centering
        \subfigure[a profiled graph]{
            \includegraphics[width=0.8\columnwidth]{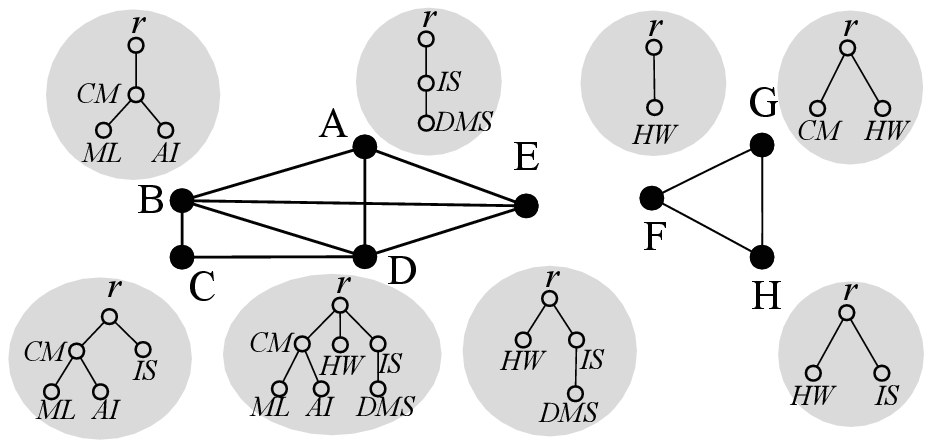}
            \label{fig:pg}
        }
       
        \subfigure{
            \includegraphics[width=0.95\columnwidth]{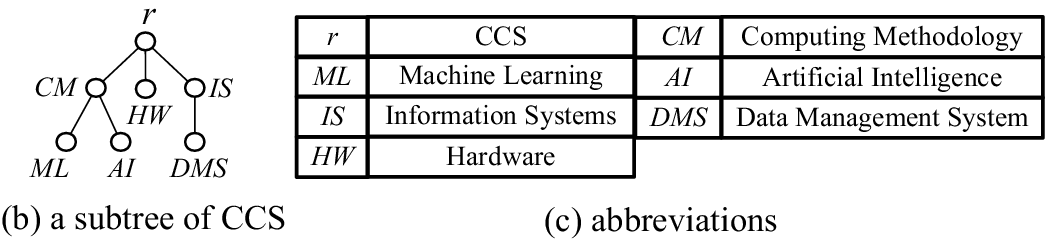}
            \label{fig:abbreviations}
        }
    \caption{ A profiled graph, a subtree of CCS and meanings of terms.}\label{fig:intro}
\end{figure}

In this paper, we investigate the CS problem for a \emph{profiled graph}. This is essentially a kind of {\it attributed graphs}, where each graph vertex is associated with a set of labels arranged in a hierarchical manner called a {\it P-tree}. Fig.~\ref{fig:pg} shows a profiled graph, which is a computer science collaboration network; each vertex represents a researcher, and a link between two vertices depicts that the two corresponding researchers have worked together before. Each vertex is associated with a P-tree, which describes the expertise of researchers. Fig.1(c) shows the meanings of the terms in each P-tree, following the \emph{ACM Computing Classification System (CCS)}~\footnote{ACM CCS: http://www.acm.org/publications/class-2012}, which is partially presented in Fig.1(b). For instance, vertex $B$ denotes a researcher, whose research domain is in computing methodology ({\it CM}), with specific interest in machine learning ({\it ML}) and artificial intelligence ({\it AI}). Profiled graphs are informative and can be found in various graph applications (e.g., knowledge bases, social and collaboration networks). Moreover, the P-trees of profiled graphs systematically organize labels related to a vertex (e.g., hierarchical and interrelated knowledge in knowledge bases, affiliation, expertise, and locations in social and collaboration networks), reflecting the semantic relationship among them. For example, in a P-tree, label ``London'' can be a child node of ``UK'', because London is a UK city.

\begin{figure}[htp]
    \centering
        \subfigure[two PC's]{
            \includegraphics[width=0.3\columnwidth]{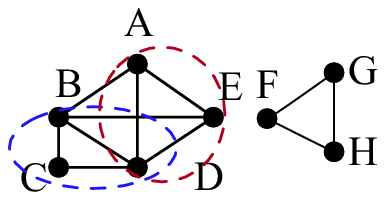}
            \label{fig:pc}{}
        }
        \hspace{0.2in}
        \subfigure[\{B, C, D\}]{
            \includegraphics[width=0.16\columnwidth]{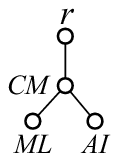}
            \label{fig:ptree1}
        }
        \hspace{0.2in}
        \subfigure[\{A, D, E\}]{
            \includegraphics[width=0.16\columnwidth]{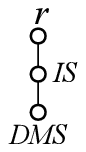}
            \label{fig:ptree2}
        }
    \caption{Illustrating profiled community search (PCS).}\label{fig:pcs}
\end{figure}

{\bf Prior works.}
The methods related to retrieval communities can generally be classified into \emph{community detection} (CD) methods and \emph{community search} (CS) methods. In general, the aim of CD algorithms is to retrieve all communities for a graph~\cite{fortunato2010community,liu2009topic,attr-topic-kdd2008,newman2004finding,attr-www2013,attr-topic-sigmod2012,yang2013community,zhou2009graph}. Note that these solutions are not “query-based”. This means that, given a user-specified query vertex, they are not customized for a query request. As a result, these algorithms normally take a long time to find all the communities for a large graph. Thus it is not suitable to use CD algorithms for quick or online retrieval of communities. 
To solve these problems, CS solutions have been recently proposed~\cite{barbieri2015efficient,cui2013online,local2014,k-truss2014,huang2015approximate,sozio2010community}. Compared with CD solutions, CS approaches are query-based, and thus are suitable to derive communities in an “online” manner. 

However, to our best knowledge, previous CS algorithms are not designed for profiled graphs. Early solutions (e.g., \cite{sozio2010community,online-sigmod2013,k-truss2014}) often only consider graph topology (e.g., a $k$-core is a community such that each vertex is connected to $k$ or more vertexes). They did not consider the use of vertex labels. As pointed out in \cite{fang2016effective}, the communities returned by those solutions are often huge (e.g., a community can easily contain over $1,000$ vertices). Moreover, the vertices included in the communities were not quite related. Recent works, such as
ACQ~\cite{fang2016effective} and ATC~\cite{huang2017attribute},
propose to use both graph structure and vertex label information.  While these works have been shown to be more effective than CS solutions that do not utilize vertex labels, they did not employ the hierarchical relationship among labels (e.g., P-trees in Fig.~\ref{fig:pg}). This may lead to suboptimal results. In Fig.~\ref{fig:pg}, 
suppose that a renowned expert $D$ wants to organize a seminar where researchers are closely related to each other. 
Based on the ACQ solution~\cite{fang2016effective}, with $k$=2, only a 2-core is searched (Fig.~\ref{fig:ptree1}), whose vertices \{$B$, $C$, $D$\} have several labels (i.e., {\it r}, {\it CM}, {\it ML}, {\it AI}) in common. However, it fails to return the community in Fig.~\ref{fig:ptree2}, whose vertices are also highly similar.
For these two communities, the shared labels as well as their relationships in the P-tree are very different. Therefore, both communities can be presented to the organizer for further selection.

{\bf Profiled community search.} In this paper, we study \emph{profiled community search} (PCS), which aims to find \emph{profiled communities}, or PC's, for a profiled graph. To obtain high-quality communities, we use \emph{structure cohesiveness} and \emph{profile cohesiveness} to constrain PC's. We adopt widely used metric \emph{minimum degree}~\cite{sozio2010community,cui2014local,li2015influential,batagelj2003m,dorogovtsev2006k,seidman1983network} to measure the structure cohesiveness. Note that in PCS problem, the minimum degree matric can be replaced by other useful matrics, e.g., $k$-truss \cite{k-truss2014} and $k$-clique~\cite{cui2013online}, to fit in other possible application scenarios. In a profiled graph, each vertex is associated with a P-tree. To measure the profile cohesiveness, we fully utilize the information in P-trees. 
Conceptually, a PC is a group of densely connected vertices, whose P-trees have the largest degree of overlap. This overlapping part is the largest common subtree shared by all the vertices. 
Fig.~\ref{fig:pc} illustrates two PC's in the profiled graph of Fig.~\ref{fig:intro}, namely \{$B$, $C$, $D$\} and \{$A$, $D$, $E$\}. In Fig.~\ref{fig:ptree1} and Fig.~\ref{fig:ptree2}, the two PC's, as well as their largest common subtrees are respectively shown.
For example, in Fig.~\ref{fig:ptree2}, vertices $A$, $D$, and $E$ all possess the subtree with root $r$ and leaf nodes {\it ``IS''} and {\it ``DMS''}. Notice that these three vertices also form a 2-core of $D$, and the common subtree among them is the largest.
The common subtree sufficiently reflects the ``theme'' of the community. In the PC of Fig.~\ref{fig:ptree1}, all the researchers involved share interest in machine learning and artificial intelligence, whereas for Fig.~\ref{fig:ptree2}, the researchers are all interested in information systems and hardware studies.

\noindent{\bf Personalization.} PCS problem allows a query user to search communities that exhibits both structure cohesiveness and profiled cohesiveness. The parameter $k$ controls the density of connection intensiveness. The profiled cohesiveness constrains the community to be semantically similar as much as possible. For instance, PCS methods can answer questions such as who are my close friends so that we have strong connection and common intresets and expertise? In contrast, existing CD methods \cite{guo2008regionalization,expert2011uncovering,chen2015finding} often use some global criteria (e.g., modularity) where the graph is partitioned a-priori with no reference to the particular query vertices. Thus existing CD methods are not suitable for personalized queries. 

\noindent{\bf Online search.} Similar to other online CS approaches, our PCS method is able to find PC's from a large-scale profiled graph effectively and efficiently. However, existing CD methods for graph query problems are generally slower. This is mainly because that they are designed for retrieving all the communities for an entire graph.

{\bf Contributions.} As we will explain, a simple solution to solve the PCS problem is extremely expensive. To improve the efficiency of finding PC's (so that they can be used in online applications), we first introduce an {\it anti-monotonicity} property, which allows the candidates for a PC to be pruned efficiently. We further develop the \emph{CP-tree} index, which systematically organizes the graph vertices and P-trees of a profiled graph. The CP-tree index enables the development of two fast PC discovery algorithms. We experimentally evaluate our solutions on two real large profiled graphs and two synthetic profiled graphs. Our results show that PC's are better representations of communities, and the CP-tree based algorithms are up to 4 order-of-magnitude faster than basic solution.

{\bf Organization.} We review the related work in Section~\ref{sec:related}. Section~\ref{sec:problem} presents the PCS problem and a basic solution. Section~\ref{sec:indexquery} discusses the CP-tree and its related solutions. We report the experimental results in Section~\ref{sec:experiment}, and conclude in Section~\ref{sec:conclusion}.

\section{Related Work}
\label{sec:related}

In the literature, there are two kinds of work related to the retrieval of communities, namely  \emph{community detection} (CD) and \emph{community search} (CS).

\noindent\textbf{Community detection (CD)} aims to obtain all the communities from a given graph. Earlier works~\cite{newman2004finding,newman2006modularity} use link-based analysis to obtain these communities. However, they do not consider the textual information associated with graphs. 
Recent works focus on attributed graphs and use some advanced techniques such as clustering techniques to identify communities. However, these studies often assume that the attribute of the vertex is a set of keywords, and do not consider the hierarchical relationship among them. For Example, Zhou et al.~\cite{zhou2009graph} used keywords to describe vertices and further compute the vertices' pairwise similarities to cluster the graph.
Qi et al.~\cite{qi2013online} studied a problem of dynamically maintaining communities of moving objects using their trajectories.
Ruan et al.~\cite{ruan2013efficient} proposed a method called {\tt CODICIL}. Based on content similarity, {\tt CODICIL} augments the original graphs by creating new edges, and then uses an effective graph sampling to boost the efficiency of clustering.
Another wide-used approach is based on topic models \cite{attr-topic-icml2009,attr-topic-sigmod2012}. Essentially, these methods still analyze the one-dimensional content to obtain the communities.

Another common approach is based on topic models. Link-PLSA-LDA~\cite{liu2009topic} and Topic-Link LDA~\cite{nallapati2008joint} models jointly model vertices’ links and content based on the LDA model. In~\cite{xu2012model}, the communities are clustered based on probabilistic inference. In \cite{sachan2012using}, information such as topics, interaction types and the social connections are considered to explore the communities. CESNA \cite{yang2013community} detects overlapping communities by assuming communities “generate” both the link and content. 
As we introduced before, CD solutions are typically time consuming, and they may not be suitable for online applications that require fast retrieval of communities. It is also interesting to examine how our PCS solutions can be extended to support CD.

\noindent\textbf{Community search (CS)} returns the communities for a given graph vertex in a fast and online manner. Most existing CS solutions~\cite{sozio2010community,online-sigmod2013,k-truss2014,li2015influential, cui2014local} only consider graph topologies, but not the labels associated with the vertices. 
To define the structure cohesiveness of the community, the minimum degree is often used~\cite{cui2014local,li2015influential,sozio2010community}. Sozio et al.~\cite{sozio2010community} proposed the first algorithm {\tt Global} to find the $k$-$\widehat{core}$ containing the query vertex. Cui et al. \cite{cui2014local} proposed {\tt Local}, which uses local expansion techniques to improve {\tt Global}. We will compare these two solutions in our experiments. Other definitions, such as $k$-clique \cite{online-sigmod2013}, $k$-truss \cite{k-truss2014} and edge connectivity \cite{hu2016querying}, have been considered for searching meaningful communities.
Recent CS solutions, such as ACQ~\cite{fang2016effective,fang2017effective} and ATC~\cite{huang2017attribute}, make use of both vertex labels and graph structure to find communities. 

Since CS is ``query-based'', it is much more suitable for fast and online query of the communities on large-scale profiled graphs. However, all above works are not designed for profiled graphs, and they do not consider the hierarchical relationship among vertex labels. Thus in this paper, we propose methods to solve the community search problem on profiled graphs. We have performed detailed experiments on real datasets (Section~\ref{sec:experiment}). As we will show, our algorithms yield better communities than state-of-the-art CS solutions do.

\section{Problem Definition and Basic Solution}
\label{sec:problem}

In this section, we first formally introduce the PCS problem, and then give a basic solution to the PCS problem. Table~\ref{tab:notation} lists all notations used in this paper.

\begin{table}[htp]
\centering \footnotesize \caption {Notations and meanings.}
\label{tab:notation}
  \small
  \begin{tabular}{c|l}
     \hline
          {\bf Notation} & {\bf Meaning}\\
     \hline\hline
          $G(V,E)$       & A profiled graph with vertex set $V$ and edge set $E$\\
     \hline
        $n$     & the number of vertices in $V$\\
    \hline
        $m$  & the the number of edges in $E$\\
     \hline
          $deg_G(v)$     & The degree of vertex $v$ in $G$\\
     \hline
          $T(v)$    & The P-tree of vertex $v$\\
     \hline
          $M(G_q)$ &The maximal common subtree of $G_q$\\
     \hline
           $G[T]$    & \tabincell{c}{the largest connected subgraph of $G$ s.t. \\ $q \in G[T]$ ,
           $\forall v \in G[T]$, $T \subseteq T(v)$}   \\
     \hline
           $G_k[T]$    & \tabincell{c}{the largest connected subgraph of $G[T]$ \\
           s.t. $q \in G_k[T]$, $deg_{G_k}(v) \geq k$}   \\
     \hline
  \end{tabular}
\end{table}

\subsection{The PCS Problem}
\label{PCSproblem}

A profiled community is a subgraph of $G$ that firstly satisfies the structure cohesiveness (i.e., the vertices in this community are connected to each other in some way). Formal definition will be introduced later. A common notion of structure cohesiveness is that the \emph{minimum degree} of all the vertices that in the community has to be at least $k$~\cite{sozio2010community,cui2014local,li2015influential,batagelj2003m,dorogovtsev2006k,seidman1983network}. This is used in the $k$-core and the PC. Let us discuss the $k$-core first.

\begin{definition}[$k$-core~\cite{md1983,batagelj2003m}]
\label{def:kcore}
Given an integer $k$ ($k\geq 0$), the $k$-core of $G$, is the largest subgraph of $G$, such that $\forall v \in k$-core, degree of v is at least $k$.
\end{definition}

Notice that $k$-core may not be connected~\cite{batagelj2003m}. Its connected components, denoted by $k$-$\widehat{core}$, are  the ``communities'' retreieved by $k$-core search algorithms. We use Example~\ref{eg:kcore} to illustrate it. 

\begin{example}
\label{eg:kcore}
In Figure~\ref{fig:pc}, each dashed circle represents a 2-core and also a 2-$\widehat{core}$. Vertices \{$A$, $B$, $D$, $E$\} group a 3-$\widehat{core}$ and vertices \{$A$, $B$, $C$, $D$, $E$\} form a 2-$\widehat{core}$ because $C$ only has a degree of 2, even though other vertices has a higher degree. 
\end{example}

A profiled graph $G(V,E)$ is an undirected graph with vertex set $V$ and edge set $E$. Each vertex $v \in V$ is associated with a \textbf{profiled tree} (P-tree) to describe $v$'s hierarchical attributes.


\begin{definition}[P-tree]
The P-tree of vertex $q$, denoted by $T(q)$= $(V_{T(q)},E_{T(q)})$, is a rooted ordered tree, where $V_{T(q)}$ is the set of attribute labels and $E_{T(q)}$ is the set of edges between labels. A P-tree satisfies following constraints:
(1) There is only one root node $r \in V_{T(q)}$;
(2) $\forall (x,y) \in E_{T(q)}$, it is directed and $y$ is the child attribute label of $x$; and
(3) $\forall y \in V_{T(q)}$ and $y\neq r$, there is one and only one $x\in V_{T(q)}$, s.t. $(x,y) \in E_{T(q)}$.
\end{definition}

In practice, labels in the upper levels of the P-tree are more semantically general than those in lower levels.
All edges in $E_{T(q)}$ preserve the semantic relationships among labels in $V_{T(q)}$.

\begin{definition}[induced rooted subtree]
Given two P-trees $S$=$(V_S,E_S)$ and $T$=$(V_T,E_T)$, $S$ is the induced rooted subtree of $T$, denoted by $S \subseteq T$, if $V_S \subseteq V_T$ and $E_S \subseteq E_T$.
\end{definition}

Essentially, an induced rooted subtree defines an inclusion relationship between two P-trees.
Unless otherwise specified, we use ``subtree'' to mean ``induced rooted subtree''.
 We call the unified P-tree of all vertices' P-trees a \emph{Global P-tree} (GP-tree), which usually corresponds to a taxonomy system in practice.


\begin{definition}[maximal common subtree]
\label{df:MaximalTree}
Given a profiled graph $G$, the maximal common subtree of $G$, denoted by $\mathcal M$($G$), holds the properties:
(1) $\forall v \in G$, $\mathcal M$($G$) $\subseteq T(v)$;
(2) there exists no other common subtree $\mathcal {M'}$($G$) such that $\mathcal {M}$($G$) $\subseteq \mathcal {M'}$($G$).
\end{definition}


The common subtree depicts the common hierarchical part among all P-trees in a subgraph. We use the maximal structure $\mathcal M$($G$) to consider both the high-level and low-level labels and it fully mines the common features of this subgraph. As a result, by using the maximal common subtree, we can maximize vertices' common profiles, including the topology and semantics of users' profiles. 
Next, we formally introduce the PCS problem.

\begin{problem}[\emph{PCS}]
\label{problem:PCS}
Given a profiled graph $G(V,E)$, a positive integer $k$, and a query node $q\in G$, find a set $\mathcal {G}$ of graphs, such that $\forall G_q \in \mathcal {G}$, the following properties hold:

\noindent$\bullet$ \textbf{Connectivity.} $G_q \subseteq G$ is connected and contains $q$;

\noindent$\bullet$ \textbf{Structure cohesiveness.} $\forall v\in G_q$, $deg_{G_q}(v)\geq k$, where $deg_{G_q}(v)$ denotes the degree of $v$ in $G_q$;

\noindent$\bullet$ \textbf{Profile cohesiveness.} There exists no other $G'_q \subseteq G$ satisfying the above two constraints, such that $\mathcal M(G_q) \subseteq \mathcal M(G'_q)$.

\noindent$\bullet$ \textbf{Maximal structure.} There exists no other $G'_q$ satisfying the above properties, such that $G_q \subset G'_q$ and $\mathcal M(G_q)$ = $\mathcal M(G'_q)$;

\end{problem}

Essentially, a profiled community (PC) is a subgraph of $G$, in which vertices are closely related in both structure and semantics.
In Problem~\ref{problem:PCS}, the first two properties and last property ensure the structure cohesiveness, as shown in the literature~\cite{fang2017effective,li2015influential}. The unique property \emph{profile cohesiveness} captures the maximal shared profile among all the vertices of $G_q$. Moreover, since the shared subtree $\mathcal M(G_q)$ shows the common hierarchical attribute, it can well explain the semantic theme of the community. 

\vspace{-0.1in}
\subsection{A Basic Solution}
\label{sec:basic}

Since vertices in the PC's share a common subtree of the query vertex $q$, a straightforward method it that we can enumerate all the subtrees of $q$'s P-tree and find the corresponding PC's. However, as illustrated in Lemma~\ref{subtreeNo}, the search space may be exponentially large and computation overhead renders this method impractical. To alleviate this issue, we iteratively perform the following two steps.

\begin{lemma}
\label{subtreeNo}
The maximum number of subtrees of a P-tree with $x$ nodes is $2^{x-1}+1$.
\end{lemma}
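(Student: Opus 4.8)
The plan is to turn the subtree count into a subset count and then optimize over all P-tree shapes. The first step is to pin down exactly which objects are being counted. Every P-tree is rooted at the same taxonomy root $r$, and an induced rooted subtree $S\subseteq T$ is itself a P-tree with $E_S\subseteq E_T$, so $S$ inherits its parent edges from $T$. I would count the subtrees that retain $r$ as their root, since these are precisely the candidate common subtrees $\mathcal M(G_q)\subseteq T(q)$ that the basic solution enumerates. For such an $S$, constraint (3) of the P-tree definition forces, for every non-root $y\in V_S$, the unique $T$-edge $(x,y)$ into $E_S$ and hence $x\in V_S$; iterating up to $r$ shows that $V_S$ is \emph{ancestor-closed}. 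Conversely, any ancestor-closed subset containing $r$ determines a unique such subtree (its edge set is forced). This yields a bijection between the nonempty subtrees and the ancestor-closed subsets of $V_T$ that contain $r$.

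Next I would bound the number of these subsets. Writing $x=|V_T|$, an ancestor-closed subset containing $r$ is determined by its intersection with the $x-1$ non-root nodes, and this intersection ranges over a \emph{subfamily} of all $2^{x-1}$ subsets of those nodes. Hence the number of nonempty subtrees is at most $2^{x-1}$. Adding the empty subtree, which models a community with no shared labels, accounts for the additive ``$+1$'' and gives the claimed upper bound $2^{x-1}+1$.

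To show tightness I would exhibit the \emph{star} $T^\star$ in which $r$ has $x-1$ leaf children. Here the only ancestor of any non-root node is $r$ itself, so every one of the $2^{x-1}$ subsets of leaves (together with $r$) is ancestor-closed and yields a valid subtree; with the empty subtree this attains exactly $2^{x-1}+1$. I would then confirm that no other shape does better: if $T$ is not a star, some non-root node $w$ has a parent $p\neq r$, so the subset $\{r,w\}$ is not ancestor-closed, meaning the choice $\{w\}$ among non-root nodes is inadmissible and strictly fewer than $2^{x-1}$ subsets survive. This simultaneously proves the bound and identifies the star as the (unique) maximizer.

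The main obstacle I anticipate is not the arithmetic but the \textbf{characterization} step: one must argue carefully that the enumerated subtrees are exactly the induced rooted subtrees retaining $r$ and that each is ancestor-closed, so that the $2^{x-1}$ unconstrained choices over the non-root nodes form the correct ambient set, and one must justify the bookkeeping of the empty subtree that produces the ``$+1$''. Once the bijection with ancestor-closed subsets is established, the inequality $(\text{admissible subsets})\le 2^{x-1}$ and the optimality of the star follow immediately.
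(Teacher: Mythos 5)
Your proof is correct, and it shares the paper's key example but organizes the upper bound quite differently. The paper's proof also exhibits the star (root with $x-1$ children) to get $f(x)\ge 2^{x-1}+1$, but for the matching upper bound it merely asserts that one ``does not need to worry about the parent--child relationship'' and then falls back on verifying that $2^{x-1}+1$ satisfies the recurrence $f(x)=\max_{i}\{f(i)\cdot[f(x-i)-1]\}+1$ obtained by splitting the tree at the root into a left part with $i$ nodes and a right part with $x-i$ nodes. You instead prove the upper bound directly: the bijection between nonempty root-preserving subtrees and ancestor-closed vertex subsets containing $r$ injects the subtree family into the $2^{x-1}$ subsets of non-root nodes, which is exactly the structural fact the paper's one-line assertion glosses over. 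Your version buys a self-contained combinatorial argument plus the extra observation that the star is the unique maximizer (any node with a non-root parent kills at least one subset), whereas the paper's recurrence buys a checkable closed-form verification at the cost of leaving the ``why is this the ambient set'' step informal. One point worth making explicit either way: the count $2^{x-1}+1$ only holds if ``subtree'' means induced rooted subtree sharing the root $r$ (plus the empty tree); you state and justify this restriction, which the paper leaves implicit.
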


\begin{figure}[htp]
\centering
\subfigure[a special case]{
\includegraphics[width=.45\columnwidth]{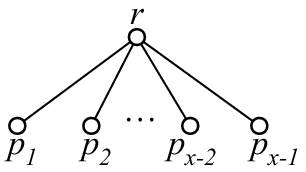}
\label{fig:maximumNumber}
}
\hspace{0.1in}
\subfigure[a general case]{
\includegraphics[width=.45\columnwidth]{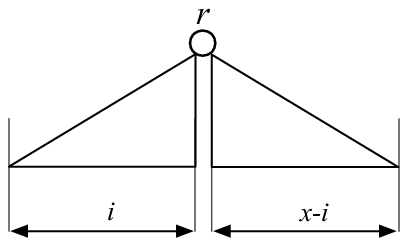}
\label{fig:computeSubtree}
}
\caption{a P-tree with $x$ nodes.}
\end{figure}

\begin{proof}

Let $f(x)=max\{L$$\in\mathbb{N}|$ $L$ is the number of subtrees of a tree with $x$ nodes$\}$.
As shown in Fig.~\ref{fig:maximumNumber}, $p_{i}$ denotes the $i$th child of the P-tree. Then it is not hard to find that there are $(2^{n-1}+1)$ subtrees including the ``empty tree'' (no P-tree node is contained). So $f(x) \geq 2^{x-1}+1$.
In this case, we do need to worry about the ``parent-child'' relationship between P-tree nodes so that $ 2^{x-1}+1$ is also the upper bound of $f(x)$. Then we can infer that $f(x) = 2^{x-1}+1$.

More formally, we can verify the correctness of this formula. As shown in Fig.~\ref{fig:computeSubtree}, the left triangle (including $r$) denotes the subtree with $i$ nodes and the right one represents the subtree with $x-i$ nodes. We present the following equation~\ref{eq:subtreeNoEq1}. Note that the ``empty tree'' should be included and thus $f(0)=1$. Obviously, we can construct different subtrees by combining subtrees in left and right parts. Then we can compute $f(x)$ by using $f(i)$ and $f(x-i)$. Note that the ``empty tree'' in both left and right part should not be included simultaneously. Finally we add 1 to $f(x)$ to represent the ``empty tree''.

\begin{equation}
f(x)=
\begin{cases}
1 & \text{$x=0$}\\
max_{i=0}^x \{f(i) \cdot [f(x-i)-1] \}+1 & \text{$x \geq 1, x\in \mathbb{N}$ }
\end{cases}
\label{eq:subtreeNoEq1}
\end{equation}

Now we can directly verify that $f(x) = 2^{x-1}+1$ satisfy the equation and this complete the proof.

\end{proof}

\noindent\textbf{Step 1: candidate subtree generation.}
To generate the candidate subtrees, the key problem is how to avoid redundancies of the subtree enumeration. In~\cite{asai2004efficient}, Asai et al. introduced a tree pattern enumeration strategy, and it is based on the following two concepts:
(1) \emph{Rightmost leaf} is the last P-tree node according to the depth-first traversal order.
(2) \emph{Rightmost path} is defined as a path from the root node to the rightmost leaf.
Given a tree $T'$, a new subtree $T$ can only be generated by adding a new node $t$ to $T'$ such that the following hold:
(1) $t$'s parent node is on the rightmost path of $T'$;
(2) $t$ is the rightmost leaf of $T$.
As shown in~\cite{asai2004efficient}, this generation strategy guarantees that all the subtrees of the P-tree will be enumerated without repetition. Thus, we follow this strategy to generate the candidate subtrees.

\noindent\textbf{Step 2: community verification.}
After a candidate subtree $T$ has been generated, we verify the existence of the corresponding community.
We use $G_k[T]$ to represent the largest connected subgraph of $G$ containing $q$ where each vertex has at least $k$ neighbors and contains the subtree $T$. We say that, $T$ is \emph{feasible}, if $G_k[T]$ exists.
The verification step is mainly based on the following lemma.

\begin{proposition}
\label{pro:1}
Given a profiled graph $G$, two P-tree $T', T$ and the query vertex $q$, if $T' \subseteq T, G_k[T] \subseteq G_k[T']$.
\end{proposition}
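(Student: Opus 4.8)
The plan is to prove the containment in two stages, mirroring the two layers hidden in the definition of $G_k[\cdot]$: first the label-based selection of vertices that produces $G[T]$, and then the $k$-core extraction that produces $G_k[T]$. The guiding intuition is an \emph{anti-monotonicity}: shrinking the required subtree from $T$ to $T' \subseteq T$ can only (weakly) enlarge the pool of admissible vertices, and enlarging the host graph can only (weakly) enlarge its connected $k$-core containing $q$.

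First I would establish that $G[T] \subseteq G[T']$ as subgraphs of $G$. The key fact is transitivity of the induced-rooted-subtree relation: for every vertex $v$ with $T \subseteq T(v)$, the hypothesis $T' \subseteq T$ gives $T' \subseteq T(v)$. Hence every vertex admissible for $G[T]$ is also admissible for $G[T']$. Since $G[T]$ is, by definition, a connected subgraph of $G$ containing $q$ all of whose vertices satisfy the label condition for $T'$, and since $G[T']$ is the \emph{largest} such connected subgraph, I would conclude $G[T] \subseteq G[T']$. To make ``largest'' unambiguous here, I would note that the union of two connected subgraphs through $q$ each satisfying the label condition is again connected (they share $q$) and again satisfies it, so a unique maximal such subgraph exists and absorbs all the others.

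Second I would argue that the $k$-core step is monotone under this inclusion, i.e.\ that $G[T] \subseteq G[T']$ forces $G_k[T] \subseteq G_k[T']$. Write $C = G_k[T]$, the connected $k$-$\widehat{core}$ of $G[T]$ containing $q$. Every vertex of $C$ has degree at least $k$ within $C$, and $C$ is connected and contains $q$; moreover $C \subseteq G[T] \subseteq G[T']$. Thus $C$ is a connected subgraph of $G[T']$ that contains $q$ and in which every vertex has degree at least $k$ — exactly the kind of subgraph of which $G_k[T']$ is the largest. Invoking the same union-closure argument (the union of two connected min-degree-$k$ subgraphs through $q$ is connected with min degree $k$, since adding edges only raises degrees), $G_k[T']$ contains every such subgraph, whence $C \subseteq G_k[T']$, that is, $G_k[T] \subseteq G_k[T']$.

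The main obstacle I anticipate is the second stage, where one must resist treating $k$-core extraction as a naive ``restrict to a subgraph'' operation: in general the degree of a vertex depends on the ambient graph, so one has to verify that the min-degree property of $C$ — which is measured \emph{inside} $C$ — survives when $C$ is viewed inside the larger $G[T']$. That survival is immediate because the witnessing degree is internal to $C$, but stating it cleanly, and pinning down that ``largest connected $k$-$\widehat{core}$ containing $q$'' is well defined via closure under union, is the delicate point; the label-transitivity stage is routine by comparison.
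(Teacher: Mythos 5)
Your proof is correct and follows essentially the same route as the paper's: the key step in both is that $T' \subseteq T \subseteq T(v)$ forces $T' \subseteq T(v)$ for every vertex $v$ of $G_k[T]$, so that $G_k[T]$ becomes one of the candidate subgraphs over which $G_k[T']$ is maximal. You are in fact more careful than the paper, whose proof consists only of the label-transitivity step (plus the trivial empty case) and leaves implicit both the internal-degree observation and the union-closure justification of ``largest'' that you spell out.
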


\begin{proof}
As we defined before, $G_k[T]$ denotes the $k$-$\widehat{core}$ containing $q$ where each vertex contains the subtree $T$.
(1) If $G_k[T] = \emptyset$, $G_k[T] \subseteq G_k[T']$ always holds.
(2) If $G_k[T] \neq \emptyset$, we have $\forall v \in G_k[T]$, $T \subseteq T(v)$. Then from $T' \subseteq T$, we can infer $\forall v \in G_k[T]$, $T' \subseteq T(v)$. This means each vertex $v \in G_k[T]$ also contains the P-tree $T'$. Thus if $G_k[T] \neq \emptyset$, $G_k[T] \subseteq G_k[T']$. In summary, Proposition~\ref{pro:1} holds.
\end{proof}

\begin{lemma}[Anti-monotonicity]
\label{lm:anti}
Given a subtree $T$, if $G_k[T] \neq \emptyset$, then $\forall T' \subseteq T$, $G_k[T'] \neq \emptyset$.
\end{lemma}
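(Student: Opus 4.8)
The plan is to prove the anti-monotonicity property as an immediate corollary of Proposition~\ref{pro:1}. The statement asserts that feasibility is preserved under taking subtrees: if the community $G_k[T]$ exists (is nonempty), then for every induced rooted subtree $T' \subseteq T$, the community $G_k[T']$ also exists. Since Proposition~\ref{pro:1} already establishes the containment $G_k[T] \subseteq G_k[T']$ whenever $T' \subseteq T$, almost all the work is done; what remains is to translate a set containment into a nonemptiness statement.

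First I would fix an arbitrary subtree $T' \subseteq T$ and invoke Proposition~\ref{pro:1} directly, which gives $G_k[T] \subseteq G_k[T']$. Next I would use the hypothesis $G_k[T] \neq \emptyset$: since $G_k[T']$ contains $G_k[T]$ as a subgraph and the latter is nonempty, $G_k[T']$ must itself be nonempty, i.e. $G_k[T'] \neq \emptyset$. Because $T'$ was arbitrary among the subtrees of $T$, this holds for all $T' \subseteq T$, which is exactly the claim. This is a one-line deduction once Proposition~\ref{pro:1} is in hand, so I would keep the argument short and simply cite the proposition as the engine.

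The only subtlety worth flagging—and the step I would treat most carefully—is the direction of the containment. It is easy to expect the ``smaller'' tree $T'$ to yield the ``smaller'' community, but the logic is reversed: imposing fewer structural constraints on the profiles (a smaller required subtree $T'$) admits \emph{more} qualifying vertices, so $G_k[T]$ is the subgraph and $G_k[T']$ is the supergraph. I would make sure the write-up states this containment in the correct orientation, matching Proposition~\ref{pro:1}, so that the nonemptiness propagates from the constrained community upward to the relaxed one rather than the other way around. Beyond this, there is no genuine obstacle: the result is a clean restatement of Proposition~\ref{pro:1} phrased in terms of existence, and its real value is algorithmic—it justifies pruning the subtree-enumeration search space, since once a subtree $T$ is found infeasible, no supertree of $T$ need be examined.
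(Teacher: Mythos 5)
Your proposal is correct and follows exactly the same route as the paper's own proof: apply Proposition~\ref{pro:1} to obtain $G_k[T] \subseteq G_k[T']$ for every $T' \subseteq T$, then conclude nonemptiness of $G_k[T']$ from the nonemptiness of its subgraph $G_k[T]$. Your remark about the orientation of the containment is a sensible precaution but introduces nothing beyond what the paper does.
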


\begin{proof}
 From Proposition~\ref{pro:1}, we know $\forall T' \subseteq T$, $G_k[T] \subseteq G_k[T']$. Now since $G_k[T] \neq \emptyset$, we have $\forall T' \subseteq T, G_k[T'] \neq \emptyset$.
\end{proof}

By Lemma~\ref{lm:anti}, we can conclude that, if $G_k[T]$ is infeasible, then we can stop generating subtrees from $T$.
The {\tt basic} method begins with generating a subtree from the root node. Then, it iteratively performs the two steps above to retrieve all the feasible $G_k[T]$s, until no larger subtrees can be generated.  Pseudocodes of {\tt basic} are attached in Algorithm~\ref{basic}.

\noindent\textbf{Complexity analysis.} Let $m$ be the number of edges in $G$. In worst case all edges are traversed to compute the $G_k[T]$ and all the subtrees are verified. As a result, {\tt basic} completes in $O$($2^{|T(q)|} \cdot m$) time where $|T(q)|$ denotes the number of nodes of $T(q)$. In practice, the value of $2^{|T(q)|}$ could be exponentially large and this makes {\tt basic} impractical. To alleviate this issue, we propose more efficient index-based solutions in next section.

Algorithm~\ref{basic} presents {\tt basic}.
We first initilize the result set $\mathcal {G}$ and load the $q$'s P-tree $T(q)$ (line 2). Then we need to compute $G_k$, the largest connected subgraph of $G$ containing $q$ where each vertex has at least $k$ degrees (line 3).
Now in the iteration, we generate new subtrees from current subtree $T'$.
For each new subtree $T$, we verify the existence of $G_k[T]$ (lines 4-10).
If $G_k[T]$ exists, we add $T$ in $\Phi$ (lines 11-12); otherwise if no subtree can be generated from $T'$ or all subtrees generated from $T'$ are infeasible, we add $G_k[T']$ in $\mathcal{G}$ if $T'$ is maximal (line 13).
Finally, all PC's are returned (line 14).

\begin{algorithm}[htp]
\caption{{\tt basic} query algorithm}
\label{basic}
\footnotesize{
\algrenewcommand{\algorithmiccomment}[1]{\hskip3em$//$ #1}
\begin{algorithmic}[1]
\Function{query($G,q,k$)}{}
  \State $\mathcal {G} \gets \emptyset$, load $T(q)$ from $G$;
  \State compute $G_k$ from $G$;
  \If{$G_k \neq \emptyset$ }
      \State$\Psi \gets$ \Call{generateSubtree($\emptyset,T(q)$)}{};
      \While{$\Psi \neq \emptyset$}
           \State $T'\gets \Psi.pop()$; flag $\gets true$;
           \State $\Phi \gets$ \Call{generateSubtree($T',T(q)$)}{};
       \For{each $T \in \Phi$}
            \State compute $G_k[T]$ from $G_k$;
             \If{$G_k[T] \neq \emptyset$}
                 \State flag $\gets false$; $\Psi.push(T)$;
            \EndIf
      \EndFor
          \If{flag = $true$ and $T'$ is maximal} 
              \State $\mathcal {G}=\mathcal {G} \cup G_k[T']$;
          \EndIf
     \EndWhile
  \EndIf
  \State \Return $\mathcal {G}$;
\EndFunction

\end{algorithmic}
}
\end{algorithm}

\section{Index-based Solutions}
\label{sec:indexquery}
We first introduce some preliminaries and the proposed \textbf{CP-tree} index, and then discuss the index-based query algorithms.

\vspace{-0.1in}
\subsection{$k$-core and CL-tree}

\textbf{\boldmath{$k$-core}.}
In line with existing CS~\cite{li2015influential,fang2016effective}, we use $k$-core to satisfy the constraints of minimum degree and maximal structure of a PC.
Given an integer $k$ ($k\geq 0$), the $k$-core of $G$, denoted by $G_{k}$, is the largest subgraph of $G$, such that $\forall v \in G_k$, $deg_{G_k}(v) \geq k$.
Since $G_k$ may be disconnected, we use $k$-$\widehat{core}$s to denote one of its connected components.
An important property of $k$-core is the ``nested" property: given two integer $i$ and $j$, $j$-$\widehat{core} \subseteq$ $i$-$\widehat{core}$  if $i < j$.
In Fig.~\ref{fig:kcores}, the $0$-core represents the whole graph, and 3-core is nested in 2-core.
Computing all the $k$-cores of a graph $G$, known as core decomposition, can be completed by an $O$($m$) algorithm~\cite{batagelj2003m}, where $m$ is the number of edges in $G$.

\textbf{CL-tree.}
Since $k$-cores are nested, all the $k$-cores of a graph can be organized into a tree structure, called \textbf{CL-tree}~\cite{fang2016effective}. In this paper, we adopt it, but skip the labels on the tree.
The CL-tree of the graph in Fig.~\ref{fig:kcores} is shown in Fig.~\ref{fig:Ctree}. Clearly, vertices in each CL-tree node and other vertices in all its descendant nodes represent a $k$-$\widehat{core}$.
For example, vertex $C$ and other vertices $\{A, B, D, E\}$ in its child node compose a 2-$\widehat{core}$. Since each vertex appears only once, the space cost of CL-tree is $O(n)$ where $n$ is the number of vertices in $G$.
In addition, we maintain a map \emph{vertexNodeMap}, where the key is the vertex and the value is the node of the corresponding CL-tree node, and it allows us to locate the $k$-$\widehat{core}$ containing any query vertex efficiently.

\vspace{-0.1in}
\begin{figure}[htp]
\centering\mbox{
 \subfigure[$k$-cores]{
            \includegraphics[width=.4\columnwidth]{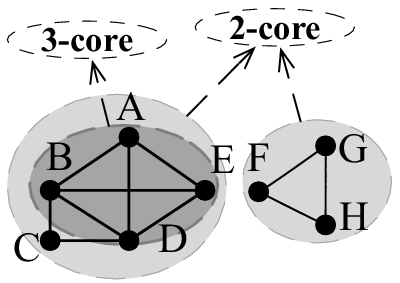}
            \label{fig:kcores}
  }
  \hspace{0.1in}
 \subfigure[CL-tree]{
            \includegraphics[width=.4\columnwidth]{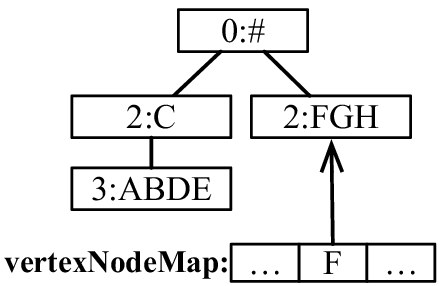}
            \label{fig:Ctree}
  }
}
\caption{$k$-cores, CL-tree.}
\label{fig:KTree}
\end{figure}

\vspace{-0.1in}
\subsection{CP-tree Index}

\noindent\textbf{Index Overview.} We build the \underline{C}ore \underline{P}rofiled tree (CP-tree) index by considering both the P-tree structure and $k$-cores.
We depict an example CP-tree in Fig.~\ref{fig:CP-tree} using the profiled graph in Fig.~\ref{fig:pg}.

\begin{figure}[htp]
\centering\mbox{

\includegraphics[width=.6\columnwidth]{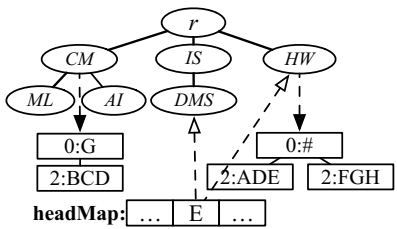}

}
\caption{CP-tree index.}
\label{fig:CP-tree}
\end{figure}

Each CP-tree node corresponds to a label and stores the $k$-cores sharing this label.
To summarize, each node $p$ consists of following four elements:

(1) \emph{label}: the attribute label;

(2) \emph{parentNode}: the parent node of $p$;

(3) \emph{childList}: a list of child CP-tree nodes of $p$; and

(4) \emph{vertexNodeMap}: a map that stores the CL-tree.

In addition, we maintain a map \emph{headMap}, where the key is a vertex $v$, and the value is a list of CP-tree nodes, each of which corresponds to a leaf node of $v$'s P-tree. Main advantages of CP-tree are listed below.

\noindent$\bullet$ \textbf{Restore P-trees.} By utilizing the \emph{headMap}, each vertex's P-tree can be restored by traversing the leaf nodes up to the root node.

\noindent$\bullet$ \textbf{Locating \boldmath{$k$-$\widehat{core}$}.} Given an integer $k$, a query vertex $q$ and a CP-tree node $t$, using \emph{vertexNodeMap}, we design a function $get(k,q,t)$ to get the $k$-$\widehat{core}$ containing $q$ where each vertex contains the label $t.label$ in constant time cost. 

\noindent$\bullet$ \textbf{Query efficiency.} As discussed above, the label information of each vertex's P-tree can be efficiently accessed using the \emph{headMap}.


\noindent\textbf{Index Construction.} We incrementally create CP-tree nodes and then link them up to build the CP-tree index. Pseudocodes of CP-tree index construction are presented in Algorithm~\ref{CPtree}. For each vertex $v$, we read $T(v)$ and create new CP-tree nodes (lines 2-5). For each CP-tree node $t$, we add $v$ in $t$ for later CL-tree construction (lines 6, 9). If P-tree node $x$ is a leaf node, we update $headMap$ (line 7). Then we link up all CP-tree nodes according to the GP-tree structure. Note that if GP-tree is unknown, we can simultaneously unify it whiling reading P-trees in the previous step (line 10). Finally, $\mathcal{I}$ is returned (line 11).

\begin{algorithm}[htp]
\caption{CP-tree index construction}
\label{CPtree}
\footnotesize{
\algrenewcommand{\algorithmiccomment}[1]{\hskip3em$//$ #1}
\begin{algorithmic}[1]
\Function{buildIndex($G(V,E)$)}{}
	\For{each $v \in V$}
		\For{each $x \in T(v)$}
			\State $t \gets$ a CP-tree node in $\mathcal{I}$ such that $t.label$ = $x.label$;
			\If{$t $ = $null$} create a CP-tree node $t$ and add it in $\mathcal{I}$; 
				\EndIf
			\State add $v$ in $t$;
			\If{$x$ is the leaf node of $T(v)$} $headMap.put(v,t)$; \EndIf
		\EndFor
	\EndFor
    \For{each $t \in \mathcal{I}$}
 		\State Build CL-tree for the subgraph of $t$;
 		\State link to its parent and child nodes;
 	\EndFor

    \State \Return $\mathcal{I}$;
\EndFunction
\end{algorithmic}
}
\end{algorithm}

\textbf{Complexity analysis.}
Obviously, lines 2-7 take the linear time. The time complexity of building a CL-tree is $O$($m \cdot \alpha(n)$)~\cite{fang2017effective,fang2016effective} where $m$ is the number of edges in $G$ and $\alpha(n)$, the inverse Ackermann function, is less than 5 for large value of $n$. Thus the time complexity of building CP-tree is $O$($|P|\cdot m \cdot \alpha(n)$), and it is linear to the size of $G$. The space cost of CP-tree is $O$($|P|\cdot n$) where $|P|$ denotes the number of labels in $G$. The space cost of the \emph{headMap} is $O$($\hat{l}\cdot n$) where $\hat{l}$ denotes the average number of leaf nodes in each vertex's P-tree and $\hat{l} < |P|$. Therefore, the total space complexity is $O$($|P|\cdot n$) which is linear to the size of $G$.

\vspace{-0.1in}
\subsection{Index-based Query Algorithms}

Now we present our index-based query solutions.
The first one follows the framework of {\tt basic}, and it incrementally generates and verifies the subtrees of P-tree (from smaller subtrees to larger ones). Thus we call it {\tt incre}.
The advanced methods borrows some ideas from MARGIN~\cite{thomas2010margin}, the algorithm of mining {\it maximal frequent subgraphs}.
As we will explain later, advanced methods can find all PC's by examining a small fraction of subtrees, resulting in high efficiency. 
In addition, their time complexities are $O$($2^{|T(q)|} \cdot m$), because in the worst case all the subtrees are verified. However, as we will show in Section~\ref{sec:efficiency}, in practice they are much more efficient than such worse-case time complexities.

\vspace{-0.1in}
\subsubsection{The Method {\tt incre}}



We begin with an interesting lemma, which greatly accelerates the verification step. 

\begin{lemma}
\label{lm:incre}
Given a CP-tree index $\mathcal{I}$, a subtree $T'$ and a new subtree $T$ which is generated from $T'$ by adding a new P-tree node. We have $G_k[T] \subseteq G_k[T'] \cap \mathcal{I}.get(k,q,T$$\setminus$$T')$, where $T$$\setminus$$T'$ denotes the new added node.
\end{lemma}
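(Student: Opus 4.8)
The plan is to prove the stated containment by splitting the intersection on the right-hand side into its two factors and showing that $G_k[T]$ is contained in each of them separately. That is, I will establish both $G_k[T] \subseteq G_k[T']$ and $G_k[T] \subseteq \mathcal{I}.get(k,q,T\setminus T')$; the conclusion then follows immediately, since a subgraph contained in two supersets is contained in their intersection.

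For the first factor I would observe that, because $T$ is obtained from $T'$ by adding a single P-tree node, we have $T' \subseteq T$ in the sense of induced rooted subtrees. Proposition~\ref{pro:1} then applies directly and yields $G_k[T] \subseteq G_k[T']$ with no further work, so this step is essentially free.

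For the second factor, write $t = T\setminus T'$ for the newly added node and let $\ell = t.label$. The key observation is that every vertex $v \in G_k[T]$ satisfies $T \subseteq T(v)$ by the definition of $G_k[T]$, and since the node $t$ (hence the label $\ell$) lies in $T$, we get $\ell \in T(v)$; that is, every vertex of $G_k[T]$ carries the label $\ell$. Consequently $G_k[T]$ is a subgraph of the label-induced subgraph $G^{\ell}$ consisting of all vertices whose P-tree contains $\ell$. Moreover $G_k[T]$ is connected, contains $q$, and has minimum degree at least $k$ within itself. I would then invoke the standard maximality property of $k$-cores: any connected, $q$-containing subgraph of $G^{\ell}$ with minimum degree $\geq k$ must lie inside the $k$-$\widehat{core}$ of $G^{\ell}$ that contains $q$, which is exactly the object returned by $\mathcal{I}.get(k,q,t)$. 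Hence $G_k[T] \subseteq \mathcal{I}.get(k,q,t)$, and combining the two factors completes the argument.

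The main obstacle I anticipate is making this second inclusion fully rigorous, specifically justifying that $\mathcal{I}.get(k,q,t)$ is the \emph{maximal} such subgraph so that every competitor such as $G_k[T]$ embeds into it. This rests on the closure property that a union of two subgraphs each having minimum degree $\geq k$ again has minimum degree $\geq k$, which guarantees a unique maximal $k$-core within $G^{\ell}$ and, restricting to the connected component through $q$, a unique $k$-$\widehat{core}$. I would also want to confirm that the degree bound transfers correctly: a vertex with $\geq k$ neighbors inside $G_k[T]$ keeps those neighbors inside $G^{\ell}$ because $G_k[T] \subseteq G^{\ell}$, so no degree is lost when passing to the larger label-induced subgraph. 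Everything beyond these points is routine bookkeeping.
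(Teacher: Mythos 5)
Your proposal is correct and follows essentially the same route as the paper's proof: both split the intersection into the two containments $G_k[T]\subseteq G_k[T']$ (via Proposition~\ref{pro:1} applied to $T'\subseteq T$) and $G_k[T]\subseteq \mathcal{I}.get(k,q,T\setminus T')$. Your treatment of the second containment is in fact slightly more careful than the paper's, which simply invokes ``$t\subseteq T$'' and Proposition~\ref{pro:1} again even though a single non-root node is not literally an induced rooted subtree; your unpacking via label membership and the maximality of the $k$-$\widehat{core}$ containing $q$ fills that small gap.
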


\begin{proof}
$T=T'\cup t$, so we have $T' \subseteq T$. Based on Proposition~\ref{pro:1}, we know $G_k[T] \subseteq G_k[T']$. Similarly, $t \subseteq T$, then we have that $G_k[T] \subseteq \mathcal{I}.get(k,q,T$$\setminus$$T')$ where $\mathcal{I}.get(k,q,T$$\setminus$$T')$ is the $k$-$\widehat{core}$ containing the query vertex $q$ and P-tree node $T$$\setminus$$T'$. Hence $G_k[T] \subseteq G_k[T'] \cap \mathcal{I}.get(k,q,T$$\setminus$$T')$.
\end{proof}

As {\tt incre} searches the communities in the subgraph which are found in former iteration, the query efficiency is improved. 
We present {\tt incre} in Algorithm~\ref{incre}.

\begin{algorithm}[htp]
\caption{{\tt incre} query algorithm}
\label{incre}
\footnotesize{
\algrenewcommand{\algorithmiccomment}[1]{\hskip3em$//$ #1}
\begin{algorithmic}[1]
\Function{query($\mathcal{I},q,k$)}{}
	\State restore $T(q)$ using $\mathcal{I}.headMap$;
	\State $\mathcal {G} \gets \emptyset$,$\Psi \gets$ \Call{generateSubtree($\emptyset,T(q)$)}{};
	\While{$\Psi \neq \emptyset$}
		\State $T'\gets \Psi.pop()$; flag $\gets true$;
		\State $\Phi \gets$ \Call{generateSubtree($T',T(q)$)}{};
			\For{each $T \in \Phi$}
				\State compute $G_k[T]$ from $G_k[T'] \cap \mathcal{I}.get(k,q,T$$\setminus$$T')$;
				\If{$G_k[T] \neq \emptyset$}
					\State flag $\gets false$; $\Psi.push(T)$;
				\EndIf
			\EndFor
		\If{flag = $true$ and $T'$ is maximal} 
			\State $\mathcal {G}=\mathcal {G} \cup G_k[T']$;
		\EndIf
	\EndWhile
	\State \Return $\mathcal {G}$;
\EndFunction

\end{algorithmic}
}
\end{algorithm}

We first use \emph{headMap} to locate the leaf nodes of $T(q)$ and then restore $T(q)$ (line 2).
We initialize $\Psi$ by using $T(q)$ (line 3).
In the iteration, for current subtree $T'$, we generate new subtrees.
For each new subtree $T$, we verify the existence of $G_k[T]$ using the index (lines 4-8).
If $G_k[T]$ exists, we add $T$ in $\Phi$ (lines 9-10); otherwise if no subtree can be generated from $T'$ or all subtrees generated from $T'$ are infeasible, we add $G_k[T']$ in $\mathcal{G}$ if $T'$ is maximal (line 11).
Finally, all PC's are returned (line 12). 

\vspace{-0.1in}
\subsubsection{The {\tt Advanced} Methods}{

The method {\tt incre} follows the Apriori-based method, which explores all possible subtrees by traversing the search space from smaller subtrees to larger ones; 
while, as demonstrated in the Section~\ref{setup}, the maximal feasible subtrees often lie in the middle of the search space, which implies that most of the exploration may be avoided. Based on this observation, we adapt MARGIN~\cite{thomas2010margin} to tackle PCS.

\noindent\textbf{MARGIN:} It does not perform a bottom-up (or top-down) traversal of the search space; instead, it narrows the search space by examining only subgraphs that lie on the border of frequent and infrequent subgraphs.
It firstly finds an initial pair of graphs ($CR$, $R$) where $R$ is frequent and $CR$ is not.
In addition, $CR$ is the {\it child subgraph} of $R$ (i.e., $CR$ is the subgraph of $R$ and they differ by exactly one edge). Similarly, $R$ is the {\it parent subgraph} of $CR$.
($CR$, $R$) is called a {\tt cut} and from this {\tt cut}, MARGIN expands and finds all other {\tt cuts} by adding or deleting an edge to obtain new adjacent subgraphs. MARGIN defines this function as {\tt expandCut} and Thomas et al.~\cite{thomas2010margin} has proved that {\tt expandCut} is able to find all maximal frequent subgraphs.

Inspired by MARGIN, we design the following functions.

\textbf{1. Function~{\tt expandPtree}.} This function is adapted from {\tt expandCut}~\cite{thomas2010margin} and the main modifications are as follows.

\noindent$\bullet$ We dynamically obtain child subgraphs and parent sugraphs, which are called \emph{child subtrees} and \emph{parent subtrees} in our case, using the \emph{parentNodes} and \emph{childLists} of CP-tree nodes, instead of pre-computing all subtrees in the search space as MARGIN does.

\noindent$\bullet$ We define a pair of P-trees ($IF$,$F$) as a cut, where $IF$ is the child subtree
of $F$ and $F$ is feasible while $IF$ is not;

\noindent$\bullet$ We dynamically verify whether a feasible subtree is maximal.

\noindent$\bullet$ We develop a function {\tt verifyPTree}  to verify the feasibility.

\begin{algorithm}[htp]
\caption{{\tt expandPtree}}
\label{ag:expandPtree}
\footnotesize
\algrenewcommand{\algorithmiccomment}[1]{\hskip3em$//$ #1}
\begin{algorithmic}[1]
\Function{expandPtree($IF,F,\mathcal {G}$)}{}
	\If{$IF = \emptyset$ and $F \neq \emptyset$} update $\mathcal {G}$;
	\Else
		\State$Q \gets \emptyset$; $Q.push((IF,F))$;
		\While{$Q \neq \emptyset$}
			\State $(IF,F) \gets Q.pop()$;
			\For{each parent $Y_i$ of $IF$}
				\If{$Y_i$ is feasible}
					\State update $\mathcal {G}$ if $Y_i$ is maximal;
					\For{each child $K$ of $Y_i$}
						\If{$K$ is infeasible} $Q.push((K,Y_i))$;\EndIf
						\If{$K$ is feasible}
							\State find common child $C$ of $K$ and $IF$;
							\State $Q.push((C,K))$;
						\EndIf
					\EndFor
				\Else
					\For{each parent $K$ of $Y_i$}
						\If{$K$ is feasible} $Q.push((Y_i,K))$; \EndIf
					\EndFor
				\EndIf
			\EndFor
		\EndWhile
	\EndIf
	\State \Return $\mathcal {G}$;
\EndFunction
\end{algorithmic}
\end{algorithm}

We now illustrate {\tt expandPtree} in Algorithm~\ref{ag:expandPtree}.
As we will introduce later, if $IF = \emptyset$ and $F \neq \emptyset$ we can directly update $\mathcal {G}$ because the $F$ is already the maximal common subtree (line 2).
Otherwise, we first use ($IF,F$) to initialize the queue $Q$ (line 4).
Then, for each pair, we iteratively verify its adjacent pairs (lines 5-17).
If the parent subtree $Y_i$ of $IF$ is feasible, $G_k[Y_i]$ here may not be the final result.
This is because subtrees are not regularly enumerated, and thus $Y_i$ may be temporarily maximal, so we need to repeatedly verify it.  If there exist other feasible subtrees verified in previous steps that are the subtree of $Y_i$, we need to replace their corresponding subgraphs with $G_k[Y_i]$ (line 9).
Finally, we return $\mathcal {G}$ (line 18).

\begin{lemma}
\label{lm:expandPtree}
Given a P-tree pair ($IF,F$), {\tt expandPtree} can find all feasible subtrees for a PCS query.
\end{lemma}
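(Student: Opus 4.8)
The plan is to prove completeness by reducing it, exactly as in the correctness argument for {\tt expandCut} in MARGIN~\cite{thomas2010margin}, to a \emph{connectivity of the border} property, and then running a reachability argument over that border. First I would fix the search space: the set of all induced rooted subtrees of $T(q)$, partially ordered by the subtree relation $\subseteq$. By the anti-monotonicity of Lemma~\ref{lm:anti}, feasibility is downward closed in this order --- if $F$ is feasible then so is every $F'\subseteq F$ --- so the feasible subtrees form a downward-closed region whose upper frontier is precisely the set of maximal feasible subtrees, i.e.\ the PC's we must output. I would then make the cut notion precise: a pair $(IF,F)$ in which $F$ is feasible, $IF=F\cup t$ for a single added P-tree node $t$, and $IF$ is infeasible. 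Every maximal feasible subtree $F$ either equals $T(q)$ itself --- the case $IF=\emptyset$ dispatched immediately in line~2 of Algorithm~\ref{ag:expandPtree} --- or admits at least one single-node extension that is infeasible, and hence participates in at least one cut. Thus it suffices to show that {\tt expandPtree} visits every cut.

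The heart of the proof is a connectivity claim analogous to the one Thomas et al.~\cite{thomas2010margin} establish for {\tt expandCut}. I would define a graph $H$ whose nodes are the cuts and whose edges join two cuts whenever one is produced from the other by the local moves of {\tt expandPtree} (lines 7--17): passing from $(IF,F)$ to a neighbouring cut through a feasible parent $Y_i$ of $IF$ and its children $K$, or through a feasible parent of an infeasible $Y_i$. I would then prove that $H$ is connected: given any two cuts I would walk one P-tree node at a time along the frontier, repeatedly invoking Lemma~\ref{lm:anti} and Proposition~\ref{pro:1} to certify that each intermediate pair is again a cut, so that the walk never leaves $H$. Granting connectivity, completeness follows by a standard reachability argument: {\tt expandPtree} seeds the queue $Q$ with the input cut and, on popping a cut, pushes all of its $H$-neighbours, so a finite BFS over the connected, finite graph $H$ eventually pops every cut; each time a feasible $Y_i$ is met (line 8) and certified maximal (line 9) it is written into $\mathcal{G}$, whence no maximal feasible subtree is missed.

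The main obstacle I anticipate is exactly this connectivity claim, because the border here lives in the \emph{restricted} lattice of induced rooted subtrees rather than in the unconstrained subgraph lattice of MARGIN. An induced rooted subtree must contain the whole path back to $r$ and may grow a node only beneath an already-present parent, so two cuts that would be adjacent in the general subgraph setting need not be reachable by the same single-node moves here. The delicate part will therefore be to verify that the \emph{parentNodes} and \emph{childLists} bookkeeping of the CP-tree nodes generates precisely the lattice neighbours that keep $H$ connected --- in particular, that when a local move would leave the space of valid subtrees, {\tt expandPtree} re-enters the frontier through the common child $C$ of $K$ and $IF$ (line 13), so that no cut becomes unreachable. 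Establishing that these tree-specific repairs preserve connectivity, and confirming that the maximality re-checks of line~9 never discard a genuine PC, is where the argument will require the most care; the remaining steps are routine given Lemma~\ref{lm:anti} and Proposition~\ref{pro:1}.
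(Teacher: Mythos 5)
Your high-level strategy is the same one the paper uses --- both reduce the lemma to the correctness of MARGIN's {\tt expandCut} --- but the paper does not re-derive the border-connectivity/reachability argument as you propose. Instead it invokes the MARGIN correctness result as a black box and checks its five preconditions: the search space embeds in a lattice, the Upper-$\Diamond$-Property (Property~\ref{property:upper}) holds, feasibility is anti-monotone (Lemma~\ref{lm:anti}), the cuts form a boundary candidate set, and the solution set (maximal feasible subtrees) is extracted from the candidates by the maximality re-checks. The only precondition that is not immediate is the second one, and the paper isolates it as Proposition~\ref{pro:2}.

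This is exactly where your proposal has a genuine gap: you correctly identify the connectivity of the cut graph $H$ in the restricted lattice of induced rooted subtrees as ``the main obstacle,'' but you leave it unresolved, and without it the BFS-reachability argument proves nothing. The missing ingredient is precisely the Upper-$\Diamond$-Property restricted to P-trees, and it closes your obstacle for a simple reason you did not state: if $C_i = P \cup e_1$ and $C_j = P \cup e_2$ are both valid induced rooted subtrees obtained from $P$ by adding single nodes $e_1, e_2$, then the parents of $e_1$ and of $e_2$ already lie in $P$, so $A = P \cup e_1 \cup e_2$ is again a valid induced rooted subtree and is a common child of $C_i$ and $C_j$. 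Hence the diamond always completes \emph{inside} the restricted lattice, your worry that ``two cuts that would be adjacent in the general subgraph setting need not be reachable here'' does not materialize, and the common-child step in line 13 of Algorithm~\ref{ag:expandPtree} never leaves the space of valid subtrees. Once you add this one observation (Proposition~\ref{pro:2}), your reconstruction and the paper's precondition-checking proof become two phrasings of the same argument; the paper's version is shorter because it outsources the reachability bookkeeping to \cite{thomas2010margin}, while yours would be more self-contained if the connectivity of $H$ were actually proved rather than deferred.
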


The proof of Lemma~\ref{lm:expandPtree} is based on following preliminaries.

\begin{figure}[htp]
\label{fig:margin}
\centering\mbox{
		\subfigure[lattice]{
            \includegraphics[width=.4\columnwidth]{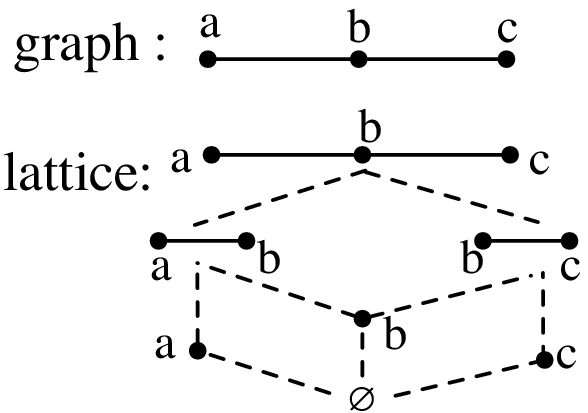}
            \label{fig:lattice}
        }
        \hspace{0.2in}
        \subfigure[Upper-$\Diamond$-Property]{
            \includegraphics[width=.4\columnwidth]{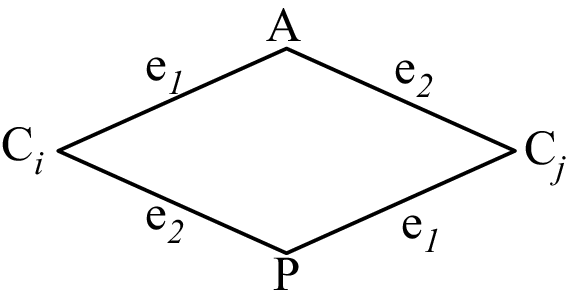}
			\label{fg:upperDiamond}
        }
}
\caption{the lattice and Upper-$\Diamond$-Property~\cite{thomas2010margin}.}
\end{figure}

\textbf{Lattice} is essentially a pre-processed data structure where all possible subgraphs of a given graph are enumerated. Taking the graph in Fig.~\ref{fig:lattice} as an example, its subgraphs in each level have the same size (i.e., numbers of edges). The bottom level (level 0) corresponds to the empty graph and the level $i$ lists all size-$i$ subgraphs. In lattice, each subgraph is linked to its \emph{parent} graphs (i.e., subgraph of this graph and they differ exactly by one edge) and \emph{child}s (i.e., super-graph of this graph and they differ exactly by one edge). We can observe that the P-tree can directly replace the graph to construct the lattice.

\begin{property}[Upper-$\Diamond$-Property~\cite{thomas2010margin}]
\label{property:upper}
Any two child subgraphs $C_i, C_j$ of a graph $P$ will have a common child subgraph $A$.
\end{property}
In Property~\ref{property:upper}, $C_i, C_j, P$ and $A$ are four subgraphs. $C_i, C_j$ are two \emph{child} subgraphs of $P$ (i.e., subgraphs of $P$ and they respectively differ with $P$ by one egde $e_1, e_2$). Then there must exist one subgraph $A$ such that $A$ is the child subgraph of $C_i$ and $C_j$. Property~\ref{property:upper} is very intuitive in graphs. Based on Proposition\ref{pro:2}, we prove that the Upper-$\Diamond$-Property can be simply adapted to fit in P-tree models.

\begin{proposition}
\label{pro:2}
P-trees satisfy the Upper-$\Diamond$-Property.
\end{proposition}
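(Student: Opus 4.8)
The plan is to exploit the lattice correspondence stated just before the proposition: since a P-tree can directly replace a graph in the lattice construction, it suffices to reinterpret the notions of \emph{child subtree} and \emph{common child subtree} in the tree setting and then exhibit the common child explicitly. The key observation I would establish first is that, for an induced rooted subtree (which must contain the shared root $r$ and stay connected), a \emph{child subtree} in the sense of Property~\ref{property:upper}—one that differs from $P$ by exactly one node/edge—can only be obtained by deleting a \emph{leaf}. Deleting any internal node would disconnect its descendants from the root and violate the P-tree constraints. Hence each child subtree of $P$ corresponds to the removal of exactly one non-root leaf.

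Following this, I would fix two distinct child subtrees $C_i$ and $C_j$ of $P$, writing $C_i = P \setminus \{a\}$ and $C_j = P \setminus \{b\}$ for distinct leaves $a \neq b$ of $P$, and propose the candidate common child $A = P \setminus \{a,b\}$. The argument then reduces to checking that $A$ is a legitimate child subtree of both $C_i$ and $C_j$. First, $A$ is still a valid induced rooted subtree: since child subtrees delete only non-root leaves we have $a,b \neq r$, so deleting them leaves the root and all remaining edges intact, i.e. $A$ is again a rooted tree. Second, I would argue that single-leaf deletions ``commute'': removing the leaf $a$ from $P$ cannot turn the other leaf $b$ into an internal node (deleting a leaf can only promote its own parent to a leaf, never demote an existing leaf), so $b$ is still a leaf of $C_i$ and $A = C_i \setminus \{b\}$ is a child subtree of $C_i$; by symmetry $A = C_j \setminus \{a\}$ is a child subtree of $C_j$. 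Thus $A$ is the desired common child subtree.

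Finally I would dispose of the degenerate cases for completeness: if $P$ has fewer than two non-root leaves (for instance a single root, or a path), it admits at most one child subtree and the property holds vacuously. Combining these observations establishes that P-trees satisfy the Upper-$\Diamond$-Property (Property~\ref{property:upper}), which is exactly the claim of Proposition~\ref{pro:2}. I expect the only genuine subtlety—and hence the main obstacle—to be the commutativity step: one must confirm that the two single-leaf deletions producing $C_i$ and $C_j$ can always be completed to the single structure $A$, i.e. that each removed leaf remains removable after the other is deleted, together with the bookkeeping that guarantees $A$ stays a rooted, connected tree. Everything else is a direct translation of the graph-lattice property into the tree language.
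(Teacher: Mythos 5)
Your argument is internally sound, but you have inverted the parent/child convention relative to the one the paper (following MARGIN) actually uses, so as written you prove the dual statement. In the paper's lattice the empty tree sits at the bottom and a \emph{child} of $P$ is a \emph{super}-tree differing by one element: the paper's proof takes $C_i = P \cup e_1$, $C_j = P \cup e_2$ for two new nodes $e_1, e_2$ whose parents already lie in $P$, and exhibits the common child $A = P \cup e_1 \cup e_2$, which is a valid rooted subtree precisely because both attachment points are in $P$. You instead read ``child subtree'' as a subtree obtained by deleting one leaf and produce $A = P \setminus \{a,b\}$ — an understandable reading, since the paper's own gloss of Property~\ref{property:upper} sloppily calls $C_i, C_j$ ``subgraphs of $P$,'' but it is the opposite of the convention used in the lattice description and in the paper's proof, and it is the upward (addition) version that the {\tt expandCut} correctness argument relies on. The saving grace is that the two arguments are mirror images: your ``commutativity'' step (each removed leaf stays removable after the other is deleted) corresponds exactly to the observation that each added node stays attachable after the other is added, so transposing your proof to the correct orientation is immediate. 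On the merits, your write-up is actually more careful than the paper's one-line proof — you justify why single-element perturbations of a rooted subtree must occur at leaves, verify that the combined perturbation is again a valid rooted tree, and handle the degenerate case of fewer than two children — but you should restate it with children as one-node extensions before relying on it for Lemma~\ref{lm:expandPtree}.
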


\begin{proof}
In P-trees, $e_1$ and $e_2$ can be two P-tree nodes such that subtrees $C_i=P \cup e_1$ and $C_j=P \cup e_2$. There must exist a P-tree $A = P \cup e_1 \cup e_2 = (P \cup e_1) \cup e_2 = (P \cup e_2) \cup e_1$. Thus $A = C_i \cup e_2 = C_j \cup e_1$ which means $A$ is the common child subtree of $C_i$ and $C_j$.
\end{proof}

Now we formally give the proof of Lemma~\ref{lm:expandPtree}.

\begin{proof}
Method {\tt expandPtree} is mainly adpted from MARGIN. As mentioned in MARGIN, the correctness holds when the adapted problem satisfies the following constraints~\cite{thomas2010margin}:

\begin{description}
\item[(1)]The search space is a subset of the \emph{lattice}.
\item[(2)]The \emph{Upper}-$\Diamond$-\emph{property} holds.
\item[(3)]The anti-monotone property is satisfied.
\item[(4)]A candidate set can be defined which is a ``boundary'' set
such that every in the set satisfies a given user-constraint and there exists an immediate child in the lattice that does not satisfy the constraint because of the anti-monotone property. For every in the set, there exists an immediate parent that does not satisfy the constraint for the monotone property.
\item[(5)]Solution sets can be generated from the candidate sets.
\end{description}

For PCS problem, the ``element'' in constraint (1) is the P-tree and obviously constraint (1) is satisfied. Proposition~\ref{pro:2} has proved that constraint (2) is satisfied. The anti-monotonicity property has been proved in Lemma~\ref{lm:anti} and thus constraint (3) is also satisfied.
In MARGIN, the ``user-constraint'' of the constraint (4) is that, given a threshold, whether a graph is frequent or not.
Here for constraint (4), the ``user-constraint'' is that whether a P-tree is feasible. For instance, a P-tree $T'$ is feasible which means $G_k[T']$ exists. If $T$, which is the child of $T'$, is not feasible (i.e., $G_k[T']$ does not exist). Then $T'$ can be defined in this ``boundary'' set and its immediate child $T$ does not satisfy this ``user-constraint'' for the anti-monotone property. Hence constraint (4) holds.
Once a is added in the candidate set, we need to verify whether this is maximal. It means the solution set is the subset of this candidate set. Thus constraint (5) is satisfied.
In conclusion, the correctness of Lemma~\ref{lm:expandPtree} holds.
\end{proof}

\textbf{2. Function~{\tt verifyPtree}.}
Given a subtree $T$, $T_{child}$ and $T_{parent}$ denote a child and the parent subtree of $T$.
Let $l$ denote the number of $T_{parent}$'s leaf nodes and $t_{n_i}$ represent the $i$th leaf node of $T_{parent}$.
Derived from Lemma~\ref{lm:incre}, we have

\noindent$\bullet$ \ \ {$G_k[T_{child}] \ \subseteq \ G_k[T] \cap \mathcal{I}.get(k,q,T_{child}$$\setminus$$T)$}.

\noindent$\bullet$ \ \ $G_k[T_{parent}] \ \subseteq \ \bigcap_{i=1}^{l} \mathcal{I}.get(k,q,t_{n_i})$.

Since all P-trees are subtrees of the GP-tree, if a P-tree has the attribute $t$, then $t$'s parent attribute $t'$ is also included. Thus, $\mathcal{I}.get(k,q,t) \subseteq \mathcal{I}.get(k,q,t')$. For a special subtree $T_i$ (a path from leaf node $t_{n_i}$ to root node $r$), we can finally get $G_k[T_i]=\mathcal{I}.get(k,q,t_{n_i})$. Note that $T_{parent}$ can be seen as several paths and thus we get $G_k[T_{parent}]\subseteq \bigcap_{i=1}^{l} \mathcal{I}.get(k,q,t_{n_i})$.

Based on CP-tree, {\tt verifyPtree} can efficiently verify subtrees. Next we discuss three methods to find the initial cut.

\textbf{3. Function~{\tt find-I}.} We can adapt {\tt incre} to find the initial cut. As shown in Algorithm~\ref{ag:find-I}, we incrementally enumerate subtrees and verify the existence of the corresponding communities. Once we find a subtree which is feasible while its child subtree is not, then we can regard them as an initial cut (lines 2-15).

\begin{algorithm}[htp]
\caption{Find the initial cut: {\tt find-I}}
\label{ag:find-I}
\footnotesize
\algrenewcommand{\algorithmiccomment}[1]{\hskip3em$//$ #1}
\begin{algorithmic}[1]
\Function{find-I($\mathcal{I},S,q,k$)}{}
	\State restore $T(q)$ using $\mathcal{I}.headMap$;
	\State $IF \gets \emptyset$; $F = T(q)$;
	\State $\Psi \gets$ \Call{generateSubtree($\emptyset,T(q)$)}{};
	\While{$\Psi \neq \emptyset$}
		\State $T'\gets \Psi.pop()$; flag $\gets true$;
		\State $\Phi \gets$ \Call{generateSubtree($T',T(q)$)}{};
			\For{each $T \in \Phi$}
				\State compute $G_k[T]$ from $G_k[T'] \cap \mathcal{I}.get(k,q,T$$\setminus$$T')$;
				\If{$G_k[T] \neq \emptyset$}
					\State flag $\gets false$; $\Psi.push(T)$;
				\EndIf
			\EndFor
		\If{flag = $true$ and $T'$ is maximal} 
			\State $F = T'$; $IF = T$;
			\State break;
		\EndIf
	\EndWhile
\State\Return$(IF,F)$;	
\EndFunction
\end{algorithmic}
\end{algorithm}

\textbf{4. Function~{\tt find-D}.} We can decrementally generate subtrees from larger subtrees to smaller ones. We represent {\tt find-D} pseudocodes in Algorithm~\ref{ag:find-D}. Firstly, if $G_k[T(q)]$ exists, we can directly return it as a qualified community (lines 2-4). In each step, for an infeasible subtree $T$, we remove one of $T$'s leaf nodes and verify the feasibility of the new subtrees (lines 6-11). Once there is a new feasible subtree, we treat $T$ and this new subtree as the initial cut (lines 12-17). 

\begin{algorithm}[htp]
\caption{Find the initial cut: {\tt find-D}}
\label{ag:find-D}
\footnotesize
\algrenewcommand{\algorithmiccomment}[1]{\hskip3em$//$ #1}
\begin{algorithmic}[1]
\Function{find-D($\mathcal{I},S,q,k$)}{}
	\State $IF\gets \emptyset$; $F\gets \emptyset$;
	\State restore $T(q)$ using $\mathcal{I}.headMap$;
	\If{$G_k[T(q)] \neq \emptyset$} $F = T(q)$;
	\Else
		\State $\Psi.push(T)$; 
		\While{$\Psi \neq \emptyset$}
			\State $T\gets \Psi.pop()$; $IF = T$;
			\State $\Theta \gets$ all leaf nodes of $T$; 
			\For{each $t \in \Theta$}
				\State compute $G_k[T\setminus t']$ from $G$;
				\If{$G_k[T\setminus t'] \neq \emptyset$} 
					\State $F = T\setminus t'$;
					\State Break;
				\Else
					\State $\Psi.push(T\setminus t')$;
				\EndIf
			\EndFor
		\EndWhile
		\EndIf
\State\Return$(IF,F)$;	
\EndFunction
\end{algorithmic}
\end{algorithm}

\textbf{4. Function~{\tt find-P}.} We can find the initial cut by directly verifying subtrees instead of the node one by one. Intuitively, P-tree can be divided into several paths (from leaf nodes to the root). According to Lemma~\ref{lm:anti}, these paths can be further verified by checking the corresponding leaf nodes. We call it find initial cut by path ({\tt find-P}).

\begin{algorithm}[htp]
\caption{Find the initial cut: {\tt find-P}}
\label{ag:find-P}
\footnotesize
\algrenewcommand{\algorithmiccomment}[1]{\hskip3em$//$ #1}
\begin{algorithmic}[1]
\Function{find-P($\mathcal{I},S,q,k$)}{}
	\State $IF\gets \emptyset$; $F \gets$ find a leaf node $t\in S$ s.t. $\mathcal{I}.get(k,q,t) \neq \emptyset$;
	\If{$F\neq \emptyset$}
		\For{each $t \in S$}
			\State computing $G_k[F\cup t]$ from $G_k[F] \cap \mathcal{I}.get(k,q,t)$;
			\If{$G_k[F\cup t] \neq \emptyset$}
				$F=F\cup t$;
			\Else
				\State $path \gets $ trace a path from $t$ to $r$ in $\mathcal{I}$;
				\State find $t', t'_{parent}$ on $path$ s.t. $G_k[t']$=$\emptyset$, $G_k[t'_{parent}] \neq \emptyset$;
				\State $IF=F\cup t'_{parent}$; $F=F\cup t'$;
				\State Break;
			\EndIf

		\EndFor

	\Else
		\For{each $t \in S$} $S.replace(t,t.parent$); \EndFor
		\State\Call{find-P($\mathcal{I},S,q,k$)}{};
	\EndIf
\State complete subtrees $IF,F$;	
\State\Return$(IF,F)$;	
\EndFunction
\end{algorithmic}
\end{algorithm}

We present the pseudocodes of {\tt find-P} in Algorithm~\ref{ag:find-P}.
$S$ denotes a P-tree node set. Initially, it consists of all leaf nodes of $T(q)$.
If there does not exist a feasible node in $S$, we trace up to verify their parent nodes (lines 13-14).
Next, we iteratively check the nodes in $S$.
If we find a node $t$ and $G_k[F\cup t]$ exists, we update $F$ (lines 5-6).
Let $t'_{parent}$ denote the parent node of $t'$.
If we find a node $t$ that $G_k[F\cup t]$ does not exist, we trace up to find the ``boundary'' where $G_k[t'_{parent}]$ exists while $G_k[t']$ does not and thus we find an initial pair (lines 8-11).
Note that at now stage, $IF$, $F$ may not be complete subtrees. Thus for the nodes in $IF$ and $F$, we need to include all their ancestor nodes and then return $(IF,F)$ as a cut (lines 15-16).

Algorithm~\ref{ag:advanced} gives the overall {\tt advanced} methods.
Notice that, there are three functions, i.e., {\tt find-I}, {\tt find-D}, and {\tt find-P}, of finding the initial cut, so we have three variants of {\tt advanced}, denoted by {\tt adv-I}, {\tt adv-D} and {\tt adv-P} respectively.

\begin{algorithm}[htp]
\caption{Advanced method}
\label{ag:advanced}
\footnotesize
\algrenewcommand{\algorithmiccomment}[1]{\hskip3em$//$ #1}
\begin{algorithmic}[1]
\Function{query($\mathcal{I},q,k$)}{}
		\State $\mathcal {G} \gets \emptyset$;
		\State $(IF,F) \gets$ \Call{find($\mathcal{I},S,q,k$)}{};
		\State \Call{expandPtree($IF,F,\mathcal {G}$)}{};
		\State \Return $\mathcal {G}$;
\EndFunction
\end{algorithmic}
\end{algorithm}

\section{Experiments}
\label{sec:experiment}
\subsection{Setup}
\label{setup}

We consider two real datasets (ACMDL and PubMed) and two synthetic datasets (Flickr and DBLP).
{\it ACMDL}~\footnote{\url{https://dl.acm.org/}} and {\it PubMed}~\footnote{\url{https://www.nlm.nih.gov}} are the co-authorship networks of researchers in computer science and biomedical areas respectively.
Each vertex of them represents an author, and an edge is a co-authorship between two authors.
For each author, her papers have been categorized by a hierarchical subject classification system (\emph{ACM CCS} or \emph{Medical Subject Headings (MeSH)}~\footnote{\url{https://meshb.nlm.nih.gov/}}), so we build the P-tree by unifying the categorization information of all her papers.
For {\it Flickr}~\footnote{\url{https://www.flickr.com/}}~\cite{thomee2015new}, each vertex represents a user and each edge denotes a ``follow'' relationship between two users. For {\it DBLP}~\footnote{\url{http://dblp.uni-trier.de/xml/}}, a vertex is an author and an edge represents a co-authorship relationship.
For each user, we use a hash function and map the associated textual content to subjects of CCS to synthesize a P-tree. By doing this, the same textual contents could be mapped for constructing the same nodes in P-trees.
Table~\ref{tab:dataset} shows the statistics of the datasets, including the numbers of vertices and edges, vertices' average degree {$\widehat d$}, the average number of labels in P-trees $\widehat P$, and the average number of labels in the GP-tree.

To evaluate PCS queries, in line with~\cite{fang2016effective}, we set the default value of $k$ to 6. For each dataset, we randomly select 100 query vertices from the 6-core.
We implement all the algorithms in Java, and run experiments on a machine having an eight-core Intel 3.40GHz processor, and 16GB of memory, with Ubuntu installed.

\vspace{-0.1in}
\begin{table}[htp]
  \centering
  \footnotesize \caption {Datasets used in our experiments.}\label{tab:dataset}
  \begin{tabular}{|c|r|r|c|c|c|}
     \hline
          {\bf Dataset}  & \multicolumn{1}{c|}{\textbf{Vertices}}
                         & \multicolumn{1}{c|}{\textbf{Edges}}
                         & \textbf{\emph{{$\widehat d$}}}
                         & \textbf{\emph{{$\widehat P$}}}
                         & \textbf{$\vert$ GP-tree  $\vert$}\\
     \hline\hline
          ACMDL        &  107,656      &  717,958     & 13.34   &  11.54 & 1,908 \\
     \hline
          Flickr           &581,099       & 4,972,274    &  17.11  &  26.63 &1,908\\
     \hline
          PubMed        &  716,459    &  4,742,606  &  13.22  &  27.10 &10,132\\
     \hline
          DBLP        &  977,288   &  6,864,546    &  14.04  &  37.98 &1,908 \\
     \hline

  \end{tabular}
\end{table}

we consider all the four datasets and check the locations of maximal feasible subtrees of 100 communities in search space for each dataset. In our experiments, because the search space may be very large, according to the depth, we average them into 5 levels. Notice that, in this case, level 3 represents the middle location of the search space. The experimental results are attached below. For example, there are 43\% maximal feasible subtrees lying on the middle of the search space in PubMed. This demonstrates the above view and explains the motivation for the advanced methods. 

\begin{table}[htp]
  \centering
  \footnotesize \caption {Locations of maximal feasible subtrees.}\label{tab:locations}
  \begin{tabular}{|c|c|c|c|c|}
     \hline
            & \textbf{ACMDL} & \textbf{Flickr} & \textbf{PubMed} & \textbf{DBLP}\\
     \hline\hline
          Level 1       & 3\%   &  8\%  & 11\% &5\% \\
     \hline
          Level 2       &  15\%  &  23\% & 5\% &13\% \\
     \hline
          Level 3       &  18\%  &  32\% & 43\% &37\% \\
     \hline
          Level 4       &  26\%  &  25\% & 24\% &31\% \\
    \hline
          Level 5       &  38\%  &  12\% & 17\% &14\% \\
  \hline
  \end{tabular}
\end{table}

\vspace{-0.1in}
\subsection{PCS Effectiveness}

As mentioned before, the existing CS methods mainly focus on non-attributed graphs. A recent work ACQ~\cite{fang2017effective,fang2016effective} investigates CS on attributed graphs. In ACQ, each vertex in the attributed graph is associated with a set of keywords. 
Communities retrieved by ACQ should satisfy the structure cohesiveness ($k$-core constraint) and ``keyword cohesiveness''~\cite{fang2016effective,fang2017effective}, i.e., the number of common keywords shared by all vertices in communities should be maximum. 
We compare PCS with ACQ. To run ACQ queries, we set each vertex's attribute as a set of keywords, which are the keywords in its P-tree.
In the following, we first present a case study, and then show the quality and diversity of communities.

\noindent$\bullet$ {\bf A Case Study}: We perform a case study on the ACMDL dataset and consider a renowned researcher: Jim Gray. We set $k$ = 4 here.
We present Jim's two PC's, i.e., PC1 and PC2, with different research areas in Fig.~\ref{fig:caseStudy1} and Fig.~\ref{fig:caseStudy2}. 
Notice that ACQ only finds one community PC1 shown in Fig.~\ref{fig:jim11}.
This is because, ACQ maximizes the number of shared keywords, so PC2 shown in Fig.~\ref{fig:jim21}, which has five shared keywords, cannot be returned.
In addition, as shown in Fig.~\ref{fig:jim12}, all shared keywords of PC1 are organized in a tree with few branches, which implies that the semantics of keywords are highly overlapped with each other.
In contrast, the shared subtree of PC2 shown in Fig.~\ref{fig:jim22} has multiple branches, so the semantics of keywords are very different and diversified.
Hence, PCS are more effective than ACQ for extracting communities from profiled graphs.


\begin{figure}[htp]
\centering
 \mbox{
    \subfigure[PC1]{
        \includegraphics[width=.38\columnwidth]{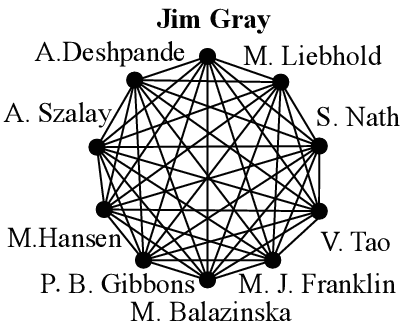}
        \label{fig:jim11}
    }
    \subfigure[The maximal common subtree of PC1]{
        \includegraphics[width=.55\columnwidth]{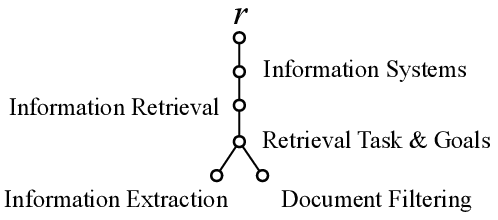}
          \label{fig:jim12}
     }
    }
    \caption{One PC of Jim Gray.}\label{fig:caseStudy1}
\end{figure}

\begin{figure}[htp]
\centering 
 \mbox{
    \subfigure[PC2]{
        \includegraphics[width=.38\columnwidth]{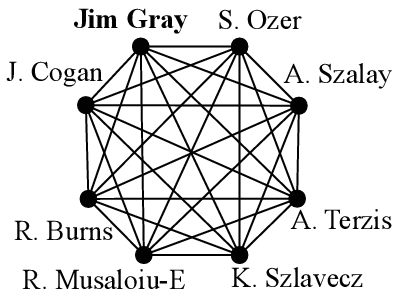}
        \label{fig:jim21}
    }
    \subfigure[The maximal common subtree of PC2]{
        \includegraphics[width=.55\columnwidth]{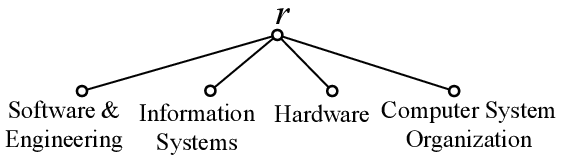}
           \label{fig:jim22}
     }
    }
    \caption{Another PC of Jim Gray.}\label{fig:caseStudy2}
\end{figure}



 \noindent$\bullet$ {\bf Community Pairwise Similarity (CPS)}: We compare PCS with three classic CS methods using ``minimum degree'' definition: ACQ~\cite{fang2016effective}, {\tt Global}~\cite{sozio2010community} and {\tt Local}~\cite{cui2014local}. We use Tree Edit Distance (TED) to compute the similarity between the P-trees of any pair of vertices in community $G_l$. Let $T_i$ be the P-tree of the $i$-th vertex in $G_l$. The CPS is then the average similarity over all pairs of $G_l$'s vertices, and all communities of $\mathcal{G}$:

\begin{equation}
\label{eq:CPS}
CPS(\mathcal{G}) = 1- \sum\limits_{l = 1}^{\cal |\mathcal{G}|} \Bigg [ \frac{1}{|G_l|^2} \sum\limits_{j=1}^{|G_l|} \sum\limits_{i=1}^{|G_l|} \frac{TED(T_i,T_j)}{|T_i\cup T_j|}   \Bigg]
\end{equation}

The CPS($\mathcal{G}$) value has a range of 0 and 1. The higher the value is, the more cohesive the community is. As shown in Fig~\ref{fig:CPS}, PCs$^*$ denotes the communities that only PCS can search. P-ACs represents those returned by both of PCS and ACQ. P-ACs have the most P-tree nodes (i.e., keywords in ACQ definition) in common, and the fewest vertices. Thus they have the highest CPS values. Note that PCs$^*$ have a close CPS value with P-ACs which implies that these unique PC's are also of highly quaility.

\noindent$\bullet$ {\bf Level-diversity ratio (LDR)}: To further measure the quality of PC's, we define a metric, called \emph{level-diversity ratio} (LDR), to measure the diversity of attributes level by level in the shared subtrees. $F$ denotes the method that we use here to compare with PCS. 
Given a query vertex $q$, we use $\mathcal{T}({F,q,j})$ to represent the maximal common P-trees of $j$-th community returned by the method $F$. $\cal L$ is the number of levels in P-tree $T(q)$. $L_i(T)$ is the number of unique labels in the $i$-th level of P-tree $T$.  ${\cal H}$ and ${\cal J}$ denote the numbers of communities returned by the method $F$ and PCS respectively. 
A lower LDR value implies that the method $F$ is less diverse than PCS.

\begin{equation}
\label{eq:LDR}
LDR(q,F) = \frac{1}{\cal L} \sum\limits_{i = 1}^{\cal L} \frac{\sum\limits_{h = 1}^{\cal H} L_i\Big[ \mathcal{T}({F,q,h)} \Big]}
{\sum\limits_{j = 1}^{\cal J} L_i\Big[\mathcal{T}({PCS,q,j)}\Big ] }
\end{equation}

Intuitively, LDR reflects the proportion of unique labels in each level. The experimental results are depicted in Fig.~\ref{fig:LDR}, which shows that communities returned by ACQ can only cover 40\% to 60\% labels of PC's in each level. This implies that PC's found by PCS have higher diversity than those of ACQ, because PCS focuses on maximizing the common structure of P-trees, rather than the number of common keywords. As a result, all communities with the semantically maximal properties can be found, and the communities are of high diversity.

\noindent$\bullet$ {\bf Community numbers}: Fig.~\ref{fig:communityNum} reports the average number of communities that per query request returns in these methods.  
From the results, we can see that PCS finds more communities than others.
This is because only PCS focuses on profiled graphs and hierarchical information in P-trees to retrieve communities. Comapred with other methods, PCS is able to extract communities with more semantic focuses.

\noindent$\bullet$ {\bf Community P-tree Frequency (CPF)}: CPF is inspired by the document frequency measure. Let $fre_{i,j}$ represent the number of vertices in $G_i$ whose P-tree contains $T(q)$'s $j$-th P-tree node. We use CPF to compute the occurence frequency over all nodes in $T(q)$ and all communities in $\mathcal{G}$:

\begin{equation}
\label{eq:CPS}
CPF(q) = \frac{1}{|G_l|\cdot |T(q)|} \sum\limits_{i=1}^{|G_l|} \sum\limits_{j=1}^{|T(q)|} \frac{fre_{i,j}}{|G_i|}
\end{equation}

Note that CPF($q$) ranges from 0 to 1 and a higher value implies a better cohesiveness. As shown in Fig~\ref{fig:CPS}, compared with the communties retrieved by both of PCS and ACQ, those unique PCs also have a highly degree of cohesiveness. 

\begin{figure}[htp]
    \centering
    \mbox{
        \subfigure[CPS]{
            \includegraphics[width=.48\columnwidth]{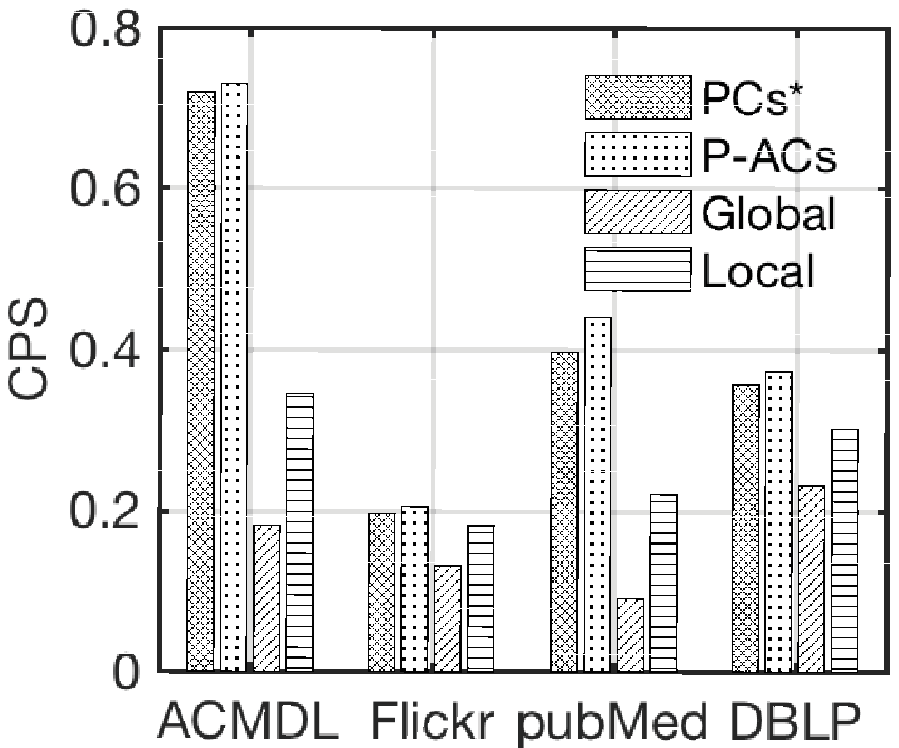}
            \label{fig:CPS}
        }
       \hspace{-0.1in}
         \subfigure[LDR]{
            \includegraphics[width=.48\columnwidth]{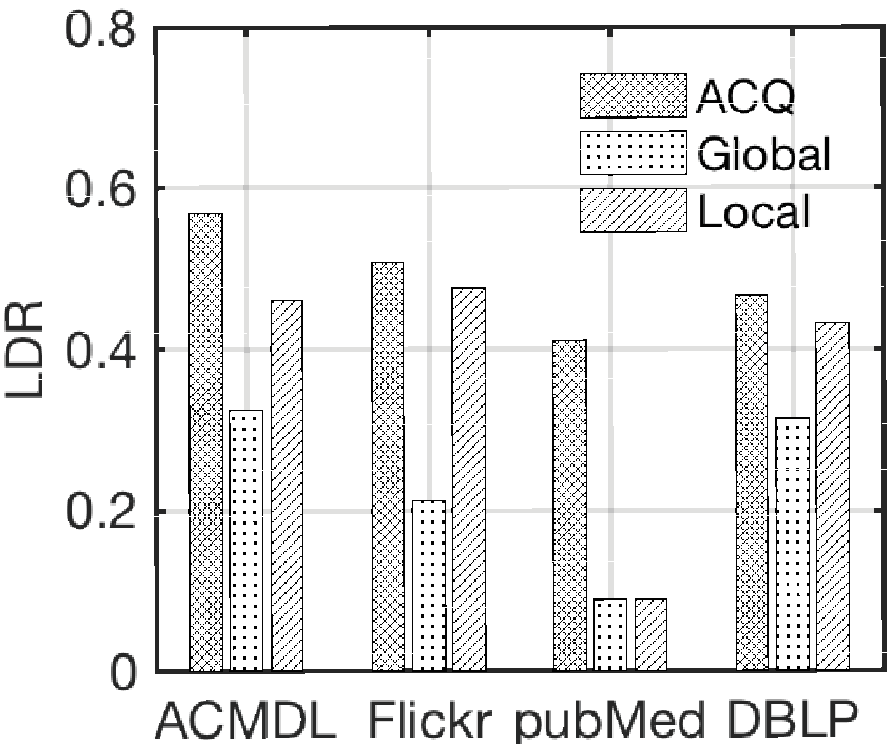}
            \label{fig:LDR}
        }
    }
    \caption{Comparing PCS with CS methods.}\label{fig:effectiveness2}
\end{figure}

\begin{figure}[htp]
    \centering
    \mbox{
        \subfigure[Community number]{
            \includegraphics[width=.48\columnwidth]{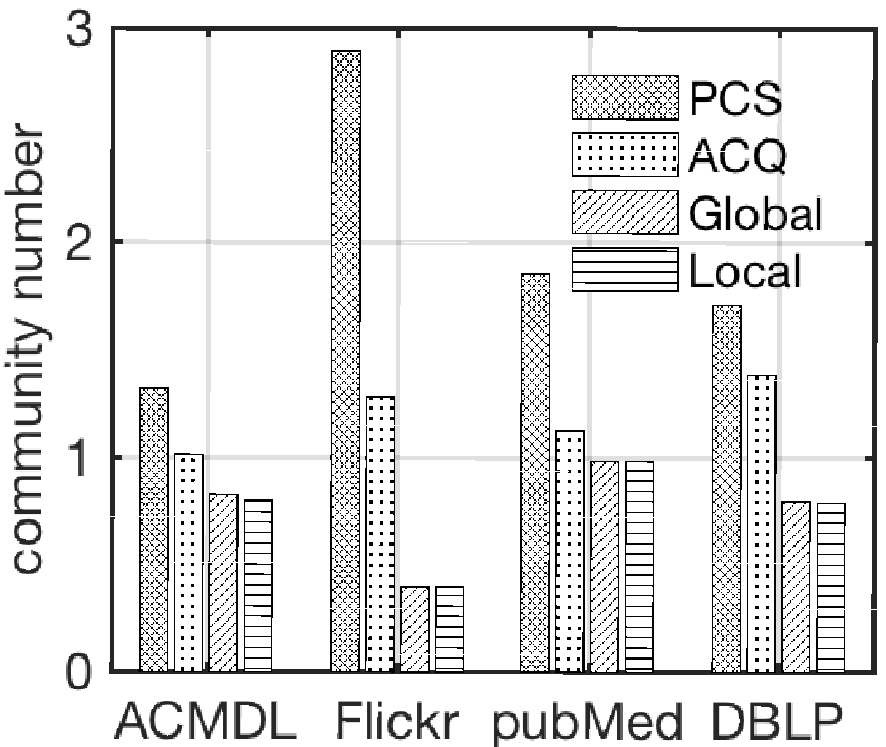}
            \label{fig:communityNum}
        }
        \hspace{-0.15in}
        \subfigure[CPF]{
            \includegraphics[width=.48\columnwidth]{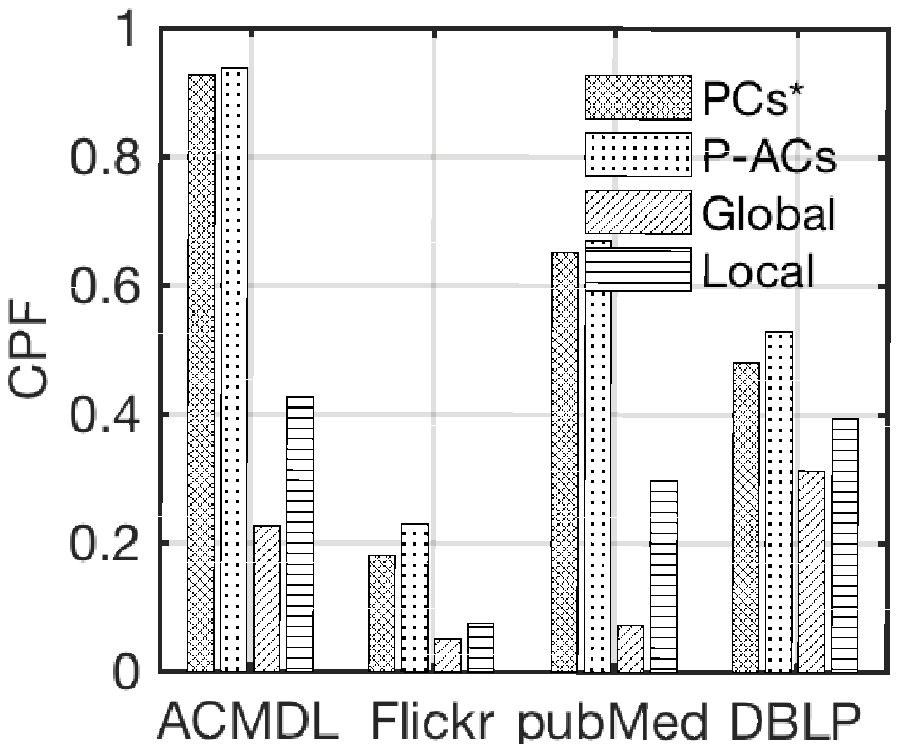}
            \label{fig:CPF}
        }
        
    }
    \caption{Comparing PCS with CS methods.}\label{fig:effectiveness2}
\end{figure}

\noindent$\bullet$ {\bf F1-score}: Here we use Facebook ego-networks~\footnote{\url{http://snap.stanford.edu/}} to evaluate the accuracy. We use FBX to denote the X-th network and each ego-network has several overlapping ground-truth communities, called friendship circles~\cite{leskovec2012learning}. 
See Table~\ref{tab:FBdataset}, each vertex has real profiles, such as political, education, etc. Similar to Flickr, we build each P-tree by using a hash function to map the real profiles to CCS subjects. We random query 100 vertices in these ground-truth communities and compute the F1-scores~\footnote{\url{https://en.wikipedia.org/wiki/F1 score}} over different methods. The F1-scores of all methods over three networks are shown in Fig.~\ref{fig:F1}. The experimental results show that,  compared with other methods, PCS can stably extract communities with high accuracy over three real networks.   

\begin{table}[htp]
  \centering
  \footnotesize \caption {Facebook datasets.}\label{tab:FBdataset}
  \begin{tabular}{|c|r|r|c|c|}
     \hline
          {\bf Dataset}  & \multicolumn{1}{c|}{\textbf{Vertices}}
                         & \multicolumn{1}{c|}{\textbf{Edges}}
                         & \textbf{\emph{{$\widehat d$}}}
                         & \textbf{\emph{{$\widehat P$}}}\\
                         
     \hline\hline
          FB1        &  1,233     &  11,972     & 19.41   &  34.54 \\
     \hline
          FB2       & 1,447       & 17,533   &  24.23  &  29.12 \\
     \hline
          FB3       &  982    &  10,112  &  20.59  &  31.10 \\
     \hline

  \end{tabular}
\end{table}

\begin{figure}[htp]
\centering
\includegraphics[width = 0.5\columnwidth]{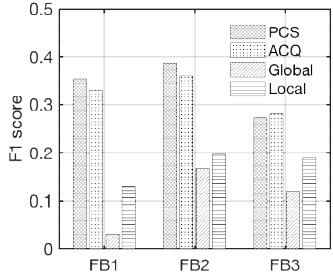}
\caption{F1-scores over three networks.}\label{fig:F1}
\end{figure}
\vspace{-0.1in}
\subsection{Comparison with Other Definition Metrics}
In this section, we compare several potential metrics to define the PCS problems. Generally, a good community should be a group of users, which are cohesive in both structures and profiles. To measure structure cohesiveness, we use the minimum degree metric, which is in line with existing works \cite{cui2014local,li2015influential,sozio2010community,fang2016effective,huang2017attribute}. To measure the profile cohesiveness, we have tried a list of possible metrics, including: 

\noindent(a) common nodes of P-trees;

\noindent(b) common path of P-trees (from the P-tree leaf to the root);

\noindent(c) common subtree of P-tree structures;

\noindent(d) similarity of vertex P-trees.

We compare these four metrics over two real datasets (ACMDL and pubMed). As shown in Fig~\ref{fig:expDef}, compared with other metrics, Metric (c) can achieve highest scores over four indices. We now discuss the reason for such differences. 

\begin{figure}[htp]
\centering
\begin{tabular}{c c}
\begin{minipage}{4.2cm}
  \includegraphics[width=4.2cm]{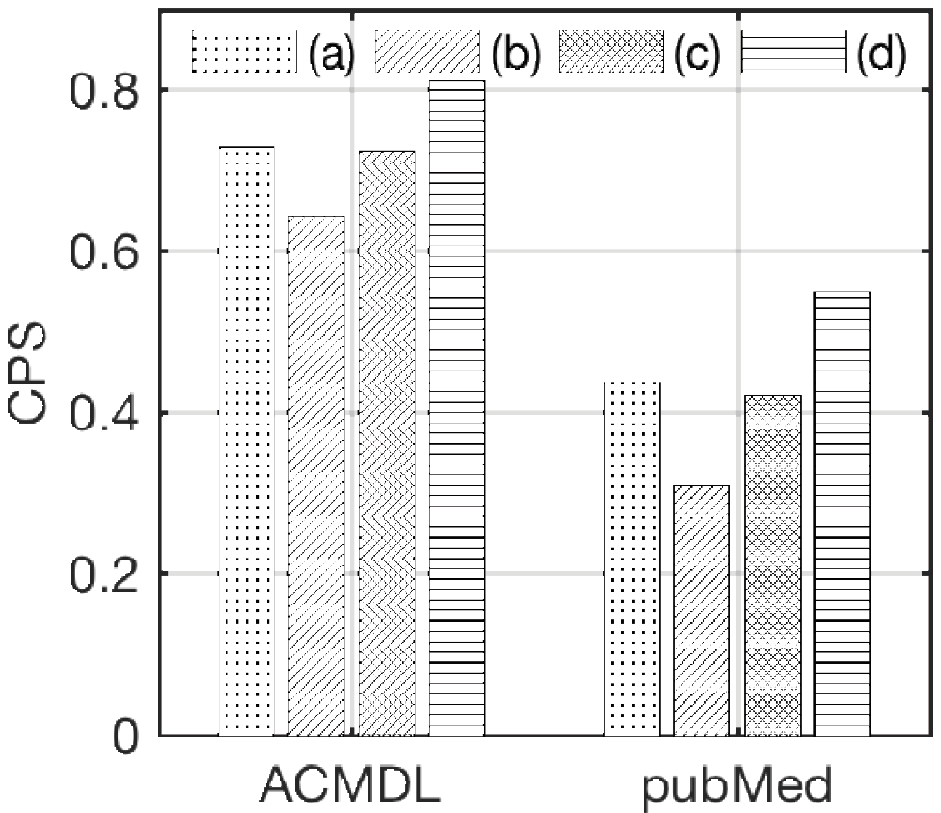}
  \end{minipage}
  &
  \begin{minipage}{4.2cm}
  \includegraphics[width=4.2cm]{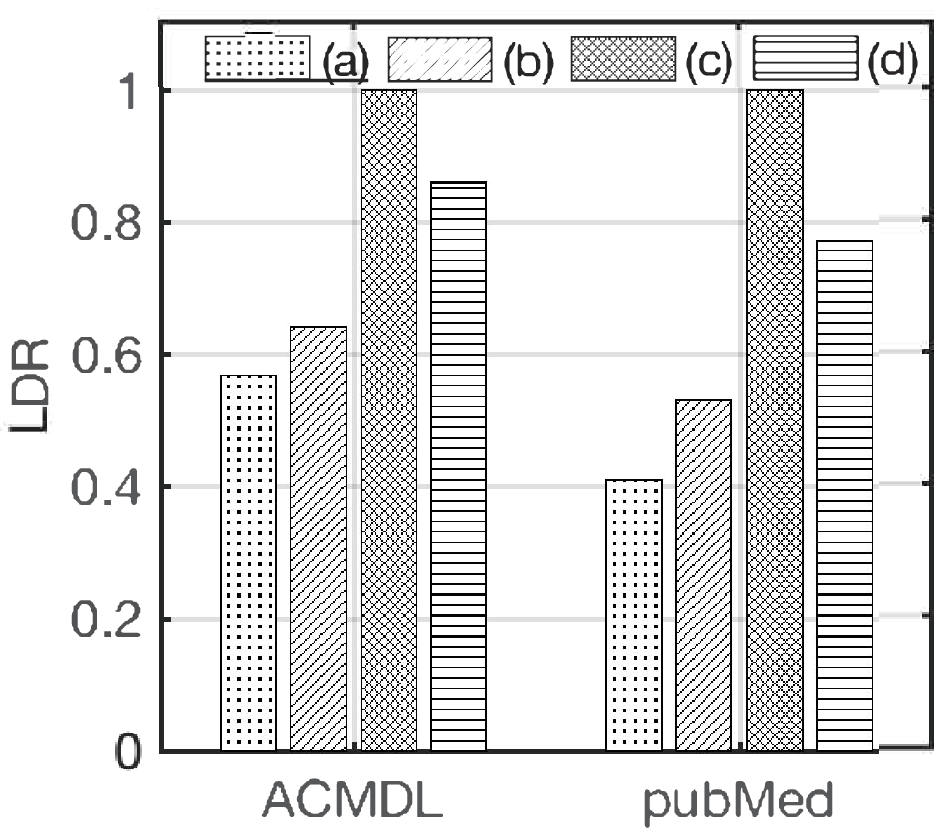}
  \end{minipage}
  \\
  \hspace{+4ex}\scriptsize (a) CPS
  &
  \hspace{+4ex}\scriptsize (b) LDR
  \\
 \begin{minipage}{4.2cm}
  \includegraphics[width=4.2cm]{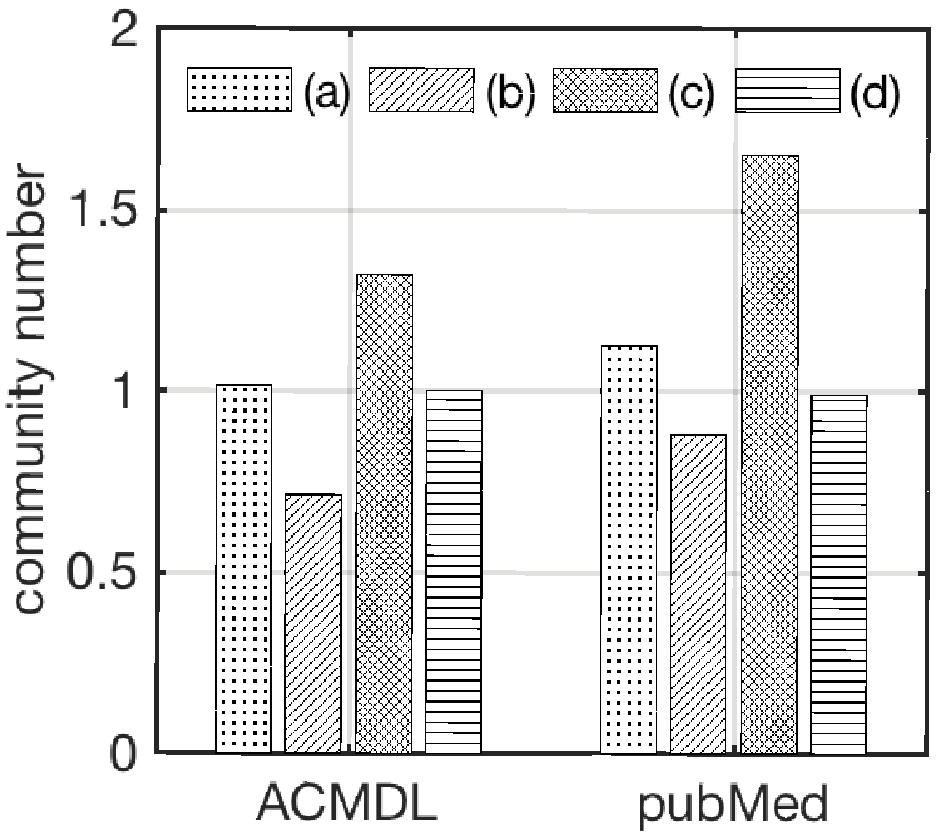}
  \end{minipage}
  &
  \begin{minipage}{4.2cm}
  \includegraphics[width=4.2cm]{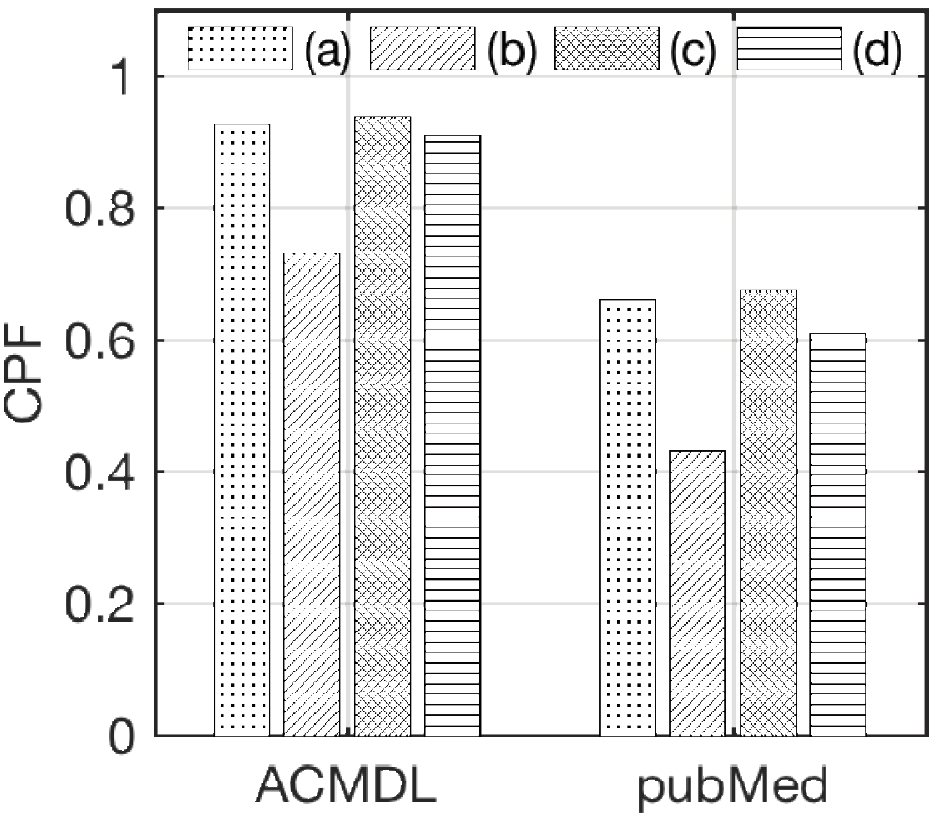}
  \end{minipage}
  \\
  \hspace{+4ex}\scriptsize (c) community number 
  &
  \hspace{+4ex}\scriptsize (b) CPF
  \\
\end{tabular}
\caption{\scriptsize  Evaluation on ACMDL, PubMed datasets.}
\label{fig:expDef}
\end{figure}

In a recently work ACQ~\cite{fang2017effective}, the authors define the vertex attribute as a set of keywords and use the number of “shared keywords” to constrain the communities. Thus, in our PCS problem, it is natural to use the number of common P-tree nodes to measure the profile cohesiveness, and it is natural to require the number of common nodes to be the largest. However, as we have analyzed before, this will ignore the interrelated relations among the nodes and violate the basic motivation for the PCS problem. Thus Metric (a) is not suitable for PCS definition. Metric (b) is defined by common paths (i.e., a common path from P-tree root to a leaf node) shared by all the nodes in the returned community. Intuitively, we can require the number of common paths to be maximum. This metric will still have some inadequacies, as it amounts to maximize the number of common leaf nodes, which will miss out meaningful communities with fewer common leaves. As a result, based on the discussions above, we think metric (b) is also not suitable for PCS problem definition. Metric (c) focuses on the common subtree of all P-trees. Clearly, a sub-tree consists a set of nodes and their hierarchical relationships. Compared with the metrics above, the common subtree of P-tree structure is more suitable for measuring the profile cohesiveness of a community, as it can adequately present the commonalities of vertex P-trees. Inspired by another recent community search work~\cite{huang2017attribute}, we tried to use the similarity of P-trees to define the problem. It means, given a threshold, to find all vertices with a budgeted similarity score. However, it is still not suitable for the PCS problem. This is because, normally, if two P-trees are to be compared by some similarity methods, the diversity of these P-trees will be nevertheless regarded as the dissimilarity. Thus, based on above discussion and experimental results in Fig~\ref{fig:expDef}, we adopt Metric (c) in our PCS problem definition. 

\vspace{-0.1in}
\subsection{Results of Efficiency Evaluation}
 \label{sec:efficiency}
In this section, we show the efficiency results of index construction and PCS queries. 

\noindent\textbf{1. Index construction.} Fig.~\ref{fig:exp-cs-tkde1}(a)-\ref{fig:exp-cs-tkde1}(b) show the scalability of the CP-tree index construction method. To evaluate the scalability of index construction method w.r.t the dataset size, 
for each dataset, we randomly select 20\%, 40\%, 60\% and 80\% of its vertices to obtain four sub-datasets respectively. As shown in Fig.~\ref{fig:exp-cs-tkde1}(a), we observe that, the time cost of the index construction is linear to the size of profiled graphs, which confirms our analysis before. Furthermore, to evaluate the scalability of index construction method over different P-tree sizes of vertices and over different fractions of the GP-tree size, we obtain four sub-datasets in a similar way. As shown in Fig.~\ref{fig:exp-cs-tkde1}(b) and Fig.~\ref{fig:exp-cs-tkde1}(c), we demonstrate that the time cost of the index construction is linear to the size of P-trees and GP-trees. 

\begin{figure*}[htp]
\centering
\begin{tabular}{c c c}
\begin{minipage}{4.175cm}
  \includegraphics[width=4.175cm]{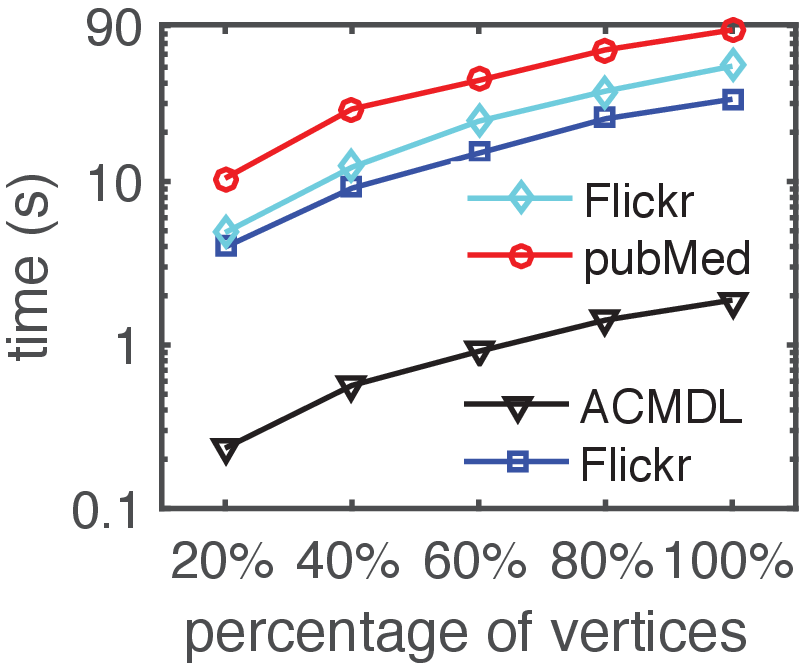}
  \end{minipage}
  &
  \begin{minipage}{4.175cm}
  \includegraphics[width=4.175cm]{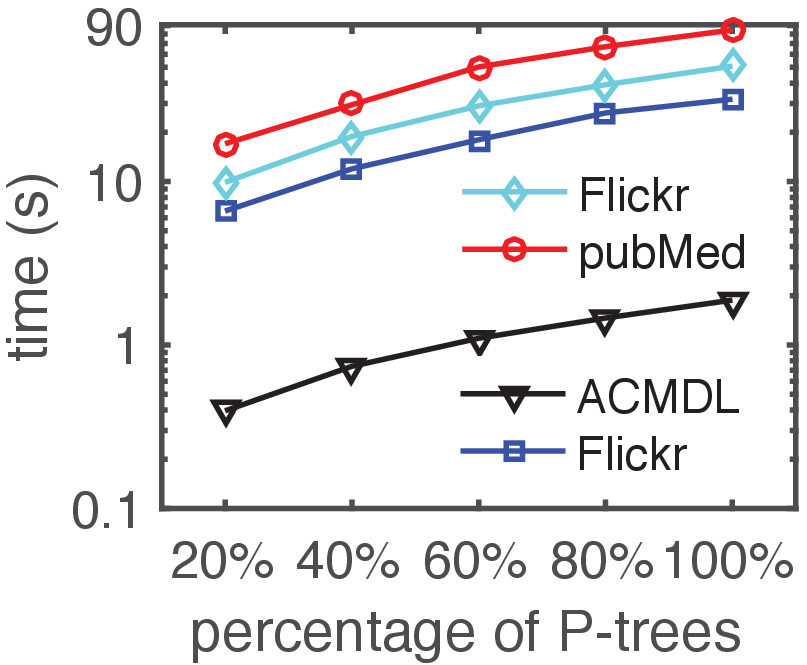}
  \end{minipage}
  &
  \begin{minipage}{4.175cm}
  \includegraphics[width=4.175cm]{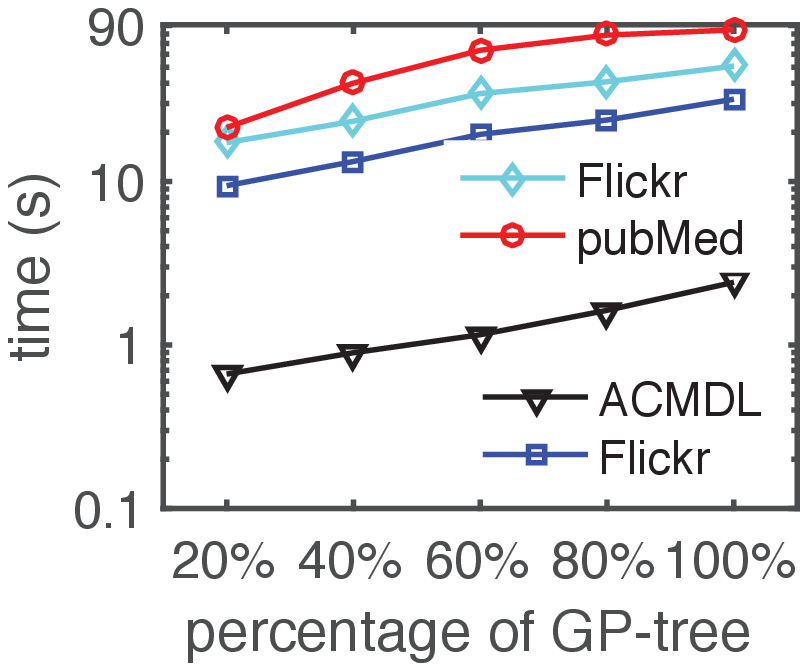}
  \end{minipage}
  \\
  \hspace{+4ex} (a) Vertex
  &
  \hspace{+4ex} (b) P-tree
  &
  \hspace{+4ex} (c) GP-tree
\end{tabular}
\caption{ Efficiency and scalability of index construction on ACMDL, Flickr, PubMed and DBLP datasets.}
\label{fig:exp-cs-tkde1}
\end{figure*}

\noindent\textbf{2. Query efficiency.} We vary the value of $k$ and show the query efficiency of different algorithms in Fig.~\ref{fig:exp-cs-tkde2}(a)-\ref{fig:exp-cs-tkde2}(d). The method {\tt incre} is 100 times faster than the {\tt basic} method, but slower than the method {\tt adv-I}. Further, {\tt adv-D} and {\tt adv-P} are 10 times faster than {\tt incre}. 
The reason is that, compared with {\tt incre}, the advanced methods narrow the search space by verifying a smaller fraction of subtrees. Also, the efficiency gap in finding an initial cut results in the sightly different performance of the advanced methods. Thus, the index-based methods run fast and {\tt adv-P} stably scales the best.
Note that three advanced methods perform similarly on Flickr. This is because the initial cut results are in the middle of the search space. Thus they have similar performance even though they search from different directions.

\noindent\textbf{3. Scalability w.r.t. vertex.} Fig.~\ref{fig:exp-cs-tkde2}(e)-\ref{fig:exp-cs-tkde2}(h) report the scalability over different fraction of vertices. For each dataset, we randomly select 20\%, 40\%, 60\% and 80\% of its vertices to respectively obtain four sub-datasets. Note that in this experiment, vertices' P-trees are fully considered. As shown in the experimental results, the algorithms run slower as more vertices involved. From the above analysis and experimental results, {\tt basic} has been proved to be quite inefficient and we do not involve it afterwards. As show in Fig.~\ref{fig:exp-cs-tkde2}(e)-\ref{fig:exp-cs-tkde2}(h), {\tt incre} and {\tt adv-I} sacle simlilarly and {\tt adv-I} is slightly better. This is because after finding a feasible answer, {adv-I} will quickly search all answers instead of exploring the whole search space. Not surprisingly, {\tt adv-D} and {\tt adv-P} scale the best.

\noindent\textbf{4. Scalability w.r.t. P-tree.} Fig.~\ref{fig:exp-cs-tkde2}(i)-\ref{fig:exp-cs-tkde2}(l) examine scalability over different fraction of P-trees for each vertex. For the P-tree of each vertex, we randomly select 20\%, 40\%, 60\% and 80\% of its P-tree nodes to generate the corresponding subtree. Here all vertices are considered. As shown in Fig.~\ref{fig:exp-cs-tkde2}(i)-\ref{fig:exp-cs-tkde2}(l), {\tt adv-I} performs better than {\tt incre}. This is because {\tt adv-I} avoids exploring the whole search space after finding an initial solution, which accelarates the query process.   
Also, {\tt adv-D} and {\tt adv-P} stably perform the best and {\tt adv-P} performs slightly better than {\tt adv-D}. The reason is that, as we introduced before, {\tt adv-P} finds initial P-tree cuts by directly verifying a batch of P-tree nodes rather than verifying nodes one by one. Thus {\tt adv-P} is always faster than {\tt adv-D}.

\noindent\textbf{5. Scalability w.r.t. GP-tree.} We test the importance of GP-tree size in Fig.~\ref{fig:exp-cs-tkde2}(m)-\ref{fig:exp-cs-tkde2}(p). For the GP-tree of each dataset, we randomly select 20\%, 40\%, 60\% and 80\% of its P-tree nodes to generate new GP-trees. Here we consider all the vertex and P-trees. As GP-tree varies, again, {\tt adv-I} is faster than {\tt incre}. Moreover, {\tt adv-D} and {\tt adv-P} achieve the best performance.

\noindent\textbf{6. Effect of {\tt find} functions}. By varying $k$ from 4 to 8, we compare three {\tt find} functions of the advanced methods in Fig.~\ref{fig:exp-cs-tkde2}(q)-\ref{fig:exp-cs-tkde2}(t). {\tt find-P} and {\tt find-D} are about 10-100 times faster than {\tt find-I}. This is because, as we expained before, {\tt find-P} can find initicial cut by directly verifying the leaf nodes of subtrees instead of enumerating nodes from the root node one by one. Note that in Fig.~\ref{fig:exp-cs-tkde2}(r), {\tt adv-D} and {\tt adv-P} have similar performance. The reason is that the efficiency of finding initial cuts depends on the distribution of initial cuts in search space. Actually, {\tt adv-D} search initial cuts incrementally and {\tt adv-P} explores the search space by decrementally verifying subtrees. Thus {\tt adv-D} and {\tt adv-P} may have similar performance.

\begin{figure*}[htp]
\centering
\begin{tabular}{c c c c}
  \begin{minipage}{4.175cm}
  \includegraphics[width=4.175cm]{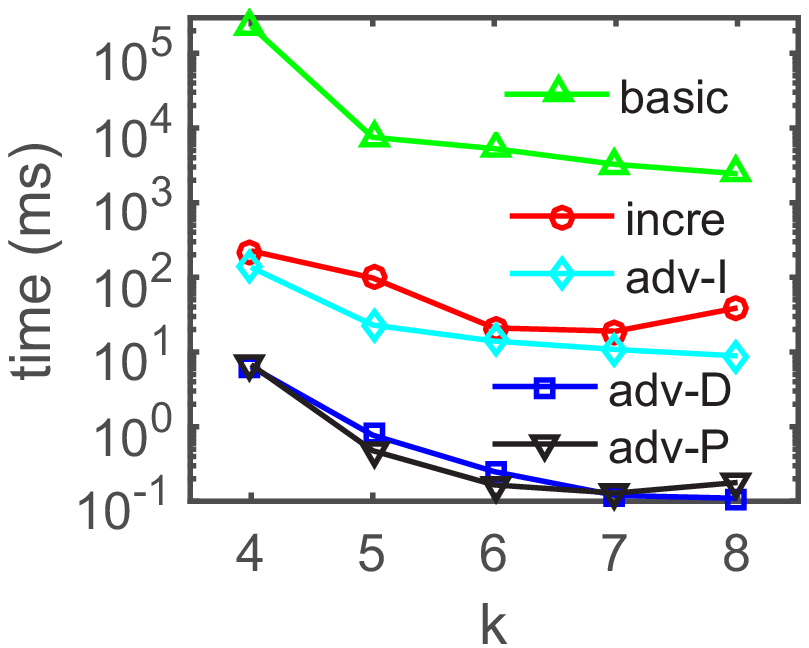}
  \end{minipage}
  &
  \begin{minipage}{4.175cm}
  \includegraphics[width=4.175cm]{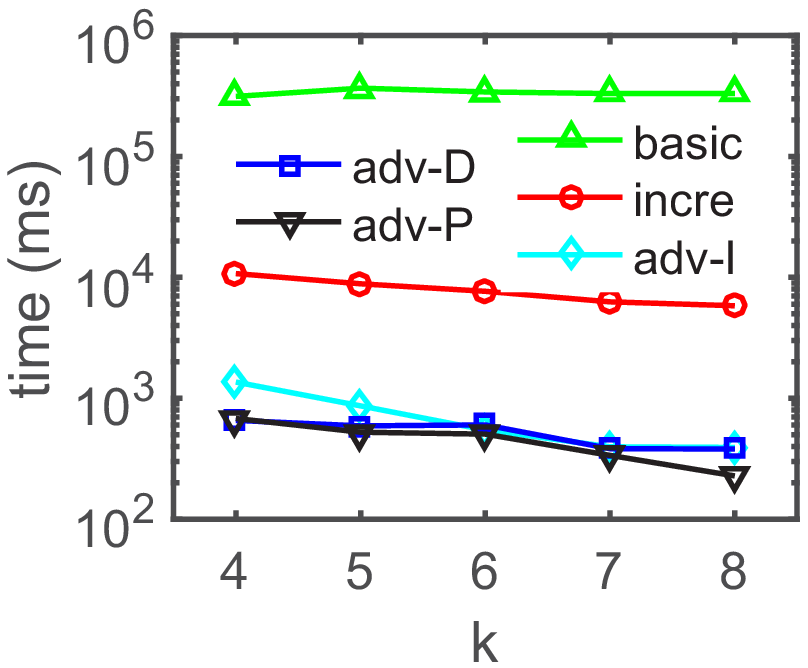}
  \end{minipage}
  &
  \begin{minipage}{4.175cm}
  \includegraphics[width=4.175cm]{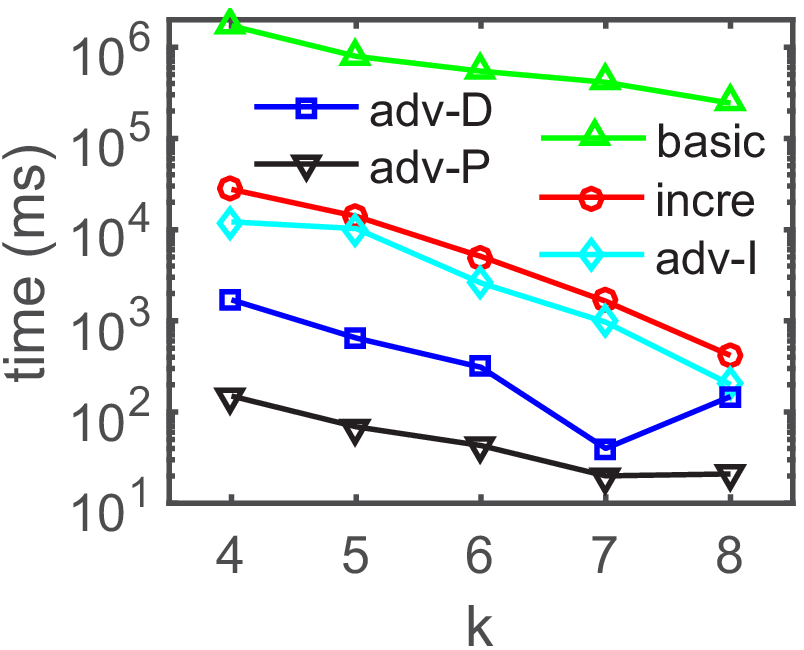}
  \end{minipage}
  &
  \begin{minipage}{4.175cm}
  \includegraphics[width=4.175cm]{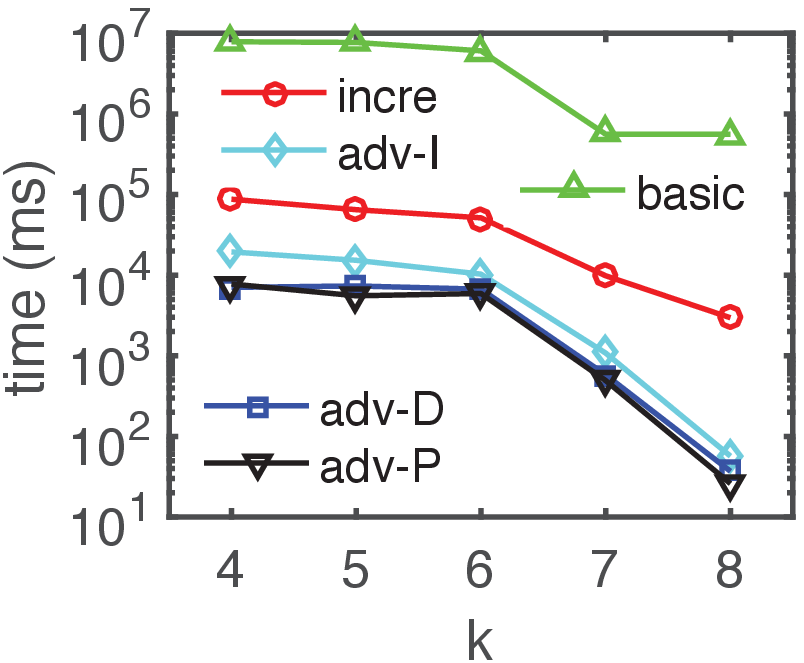}
  \end{minipage}
  \\
    (a) ACMDL (efficiency)
  &
   (b) Flickr (efficiency)
  &
   (c) PubMed (efficiency)
  &
   (d) DBLP (efficiency)
  \\
\begin{minipage}{4.175cm}
  \includegraphics[width=4.175cm]{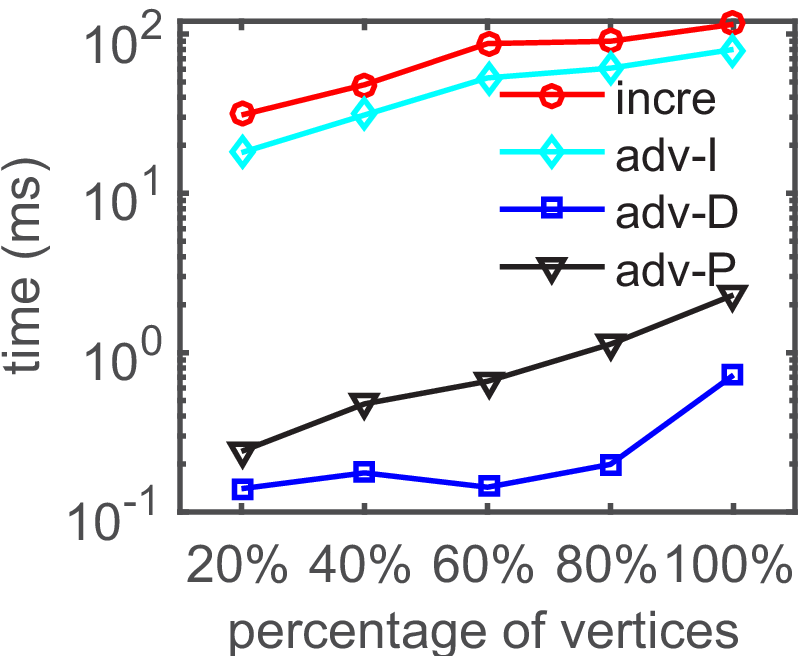}
  \end{minipage}
  &
  \begin{minipage}{4.175cm}
  \includegraphics[width=4.175cm]{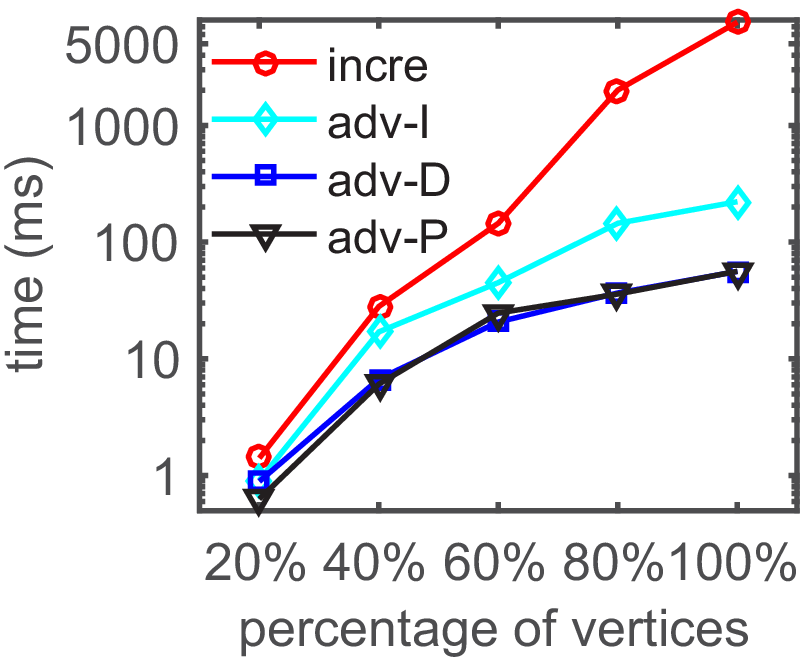}
  \end{minipage}
  &
   \begin{minipage}{4.175cm}
  \includegraphics[width=4.175cm]{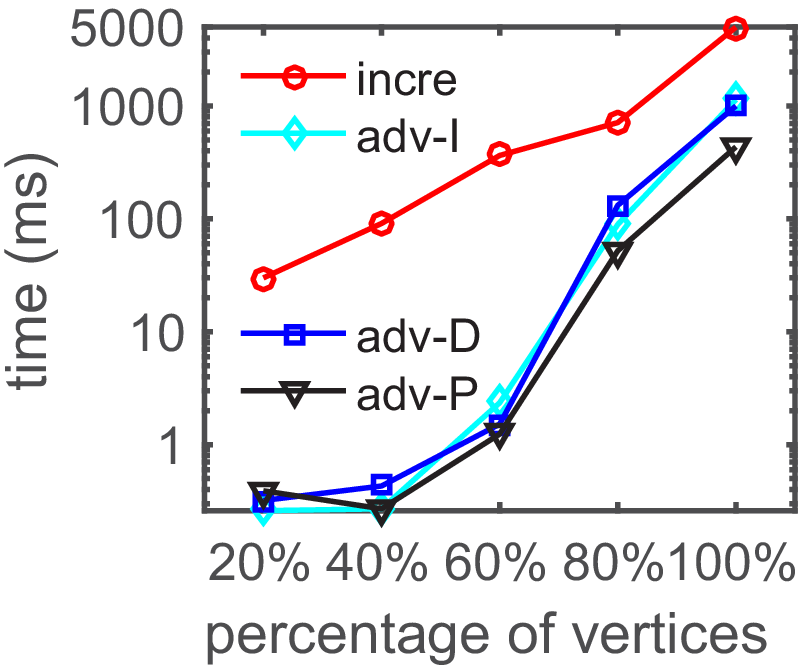}
  \end{minipage}
  &
  \begin{minipage}{4.175cm}
  \includegraphics[width=4.175cm]{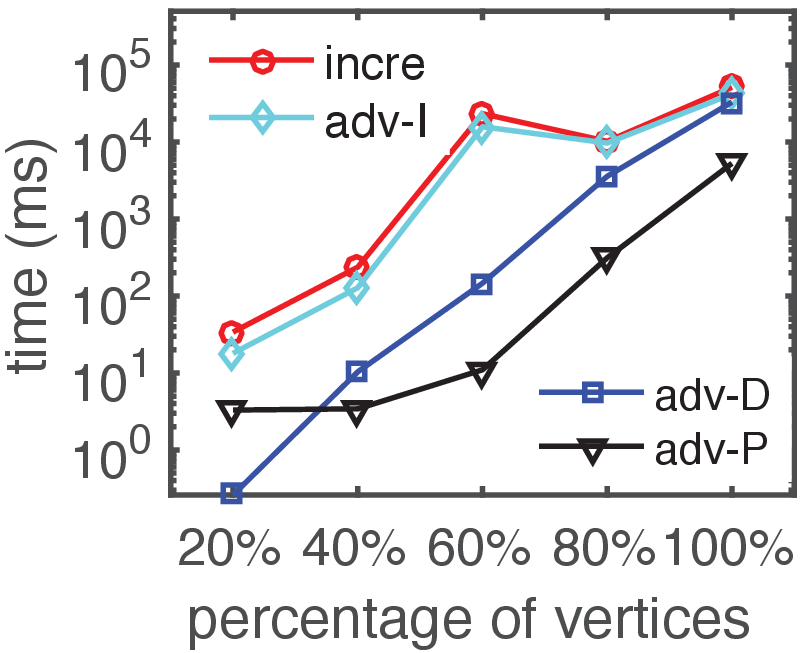}
  \end{minipage}
  \\
   (e) ACMDL (vertex scalab.)
  &
     (f) Flickr (vertex scalab.)
  &
     (g) pubMed (vertex scalab.)
   &
     (h) DBLP (vertex scalab.)
   \\
\begin{minipage}{4.175cm}
  \includegraphics[width=4.175cm]{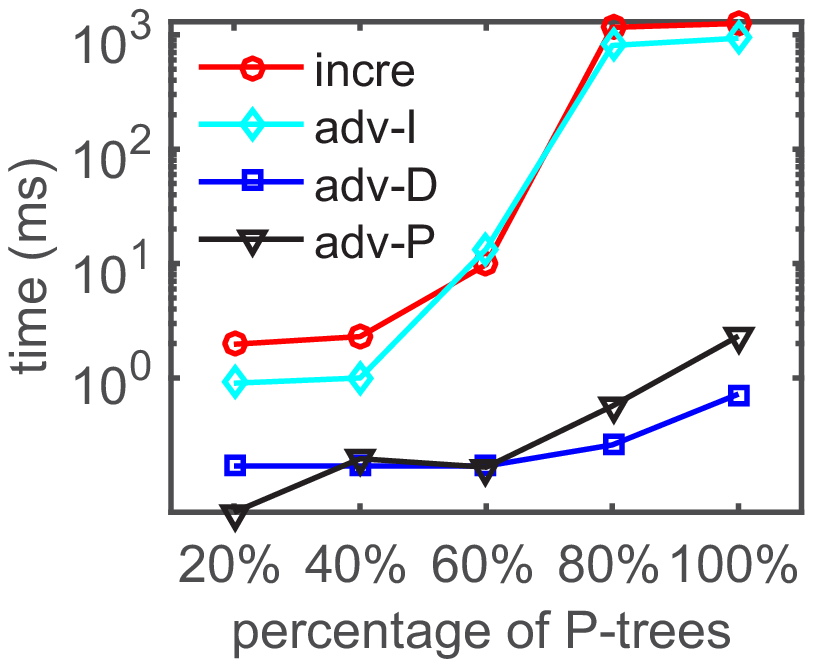}
  \end{minipage}
  &
  \begin{minipage}{4.175cm}
  \includegraphics[width=4.175cm]{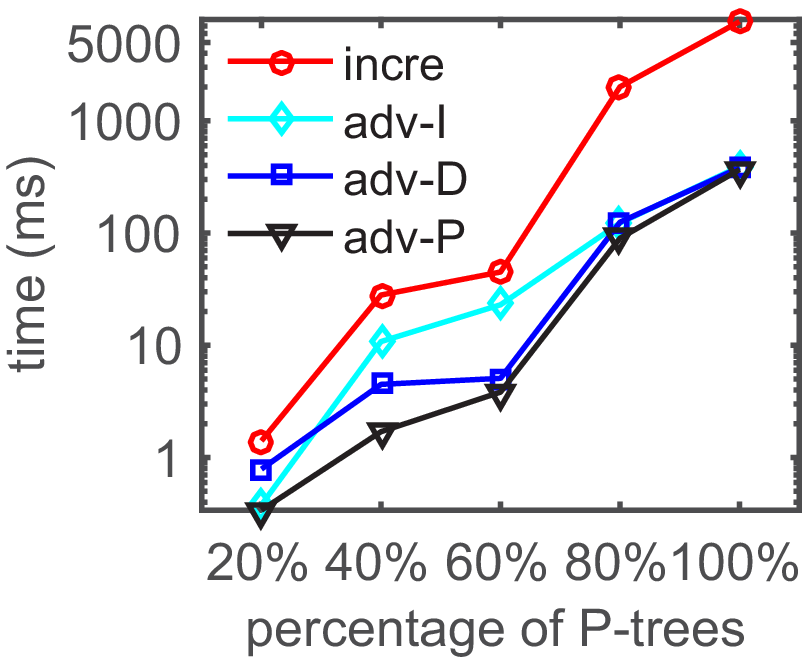}
  \end{minipage}
  &
  \begin{minipage}{4.175cm}
  \includegraphics[width=4.175cm]{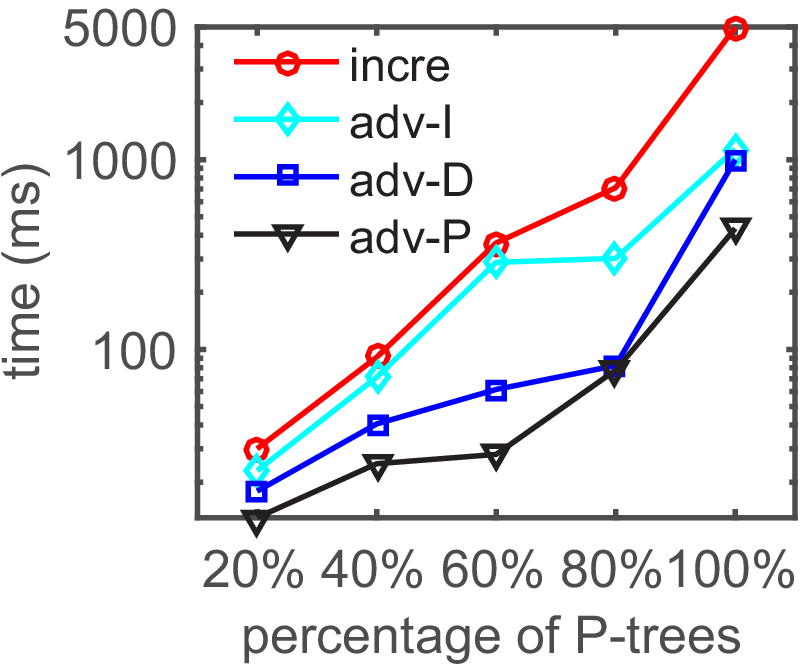}
  \end{minipage}
  &
  \begin{minipage}{4.175cm}
  \includegraphics[width=4.175cm]{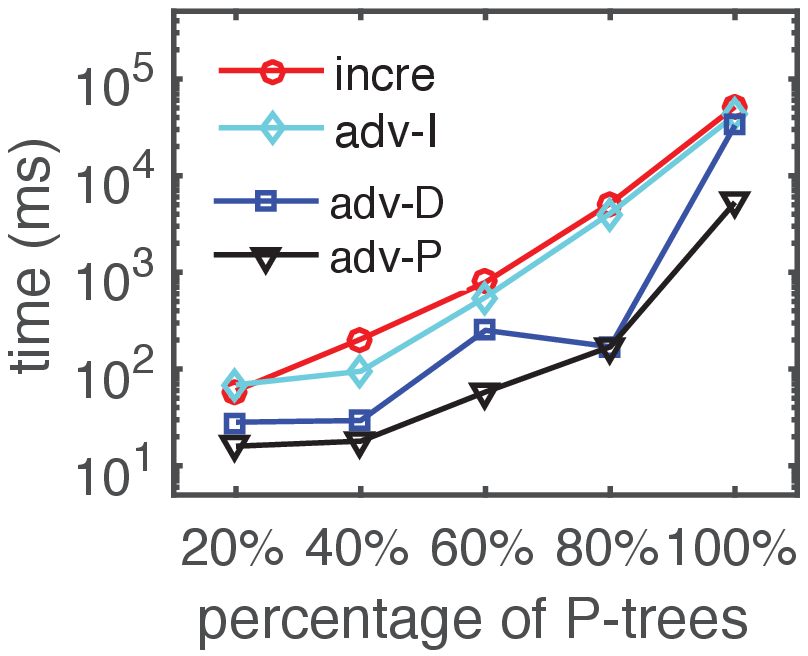}
  \end{minipage}
  \\
   (i) ACMDL (P-tree scalab.)
  &
     (j) Flickr (P-tree scalab.)
  &
     (k) pubMed (P-tree scalab.)
   &
     (l) DBLP (P-tree scalab.)
   \\
  \begin{minipage}{4.175cm}
  \includegraphics[width=4.175cm]{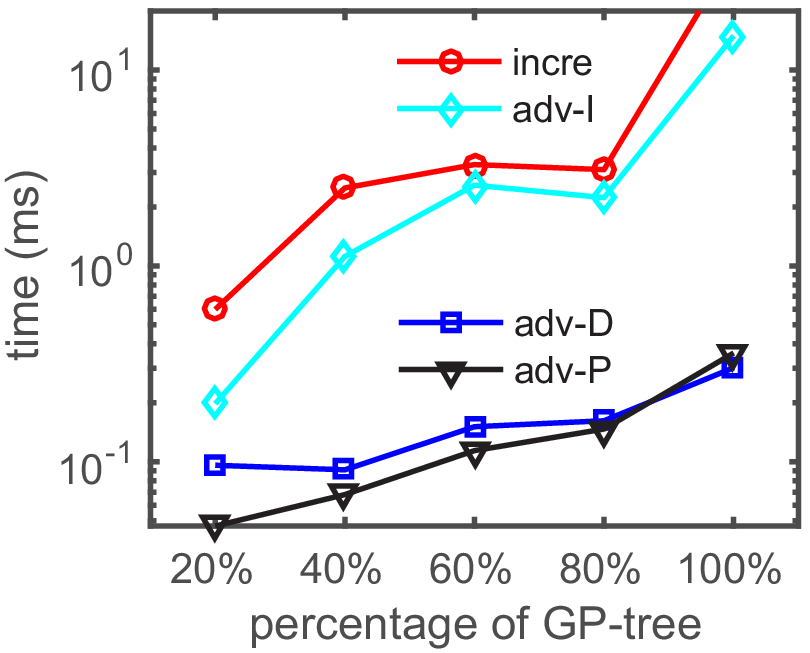}
  \end{minipage}
  &
  \begin{minipage}{4.175cm}
  \includegraphics[width=4.175cm]{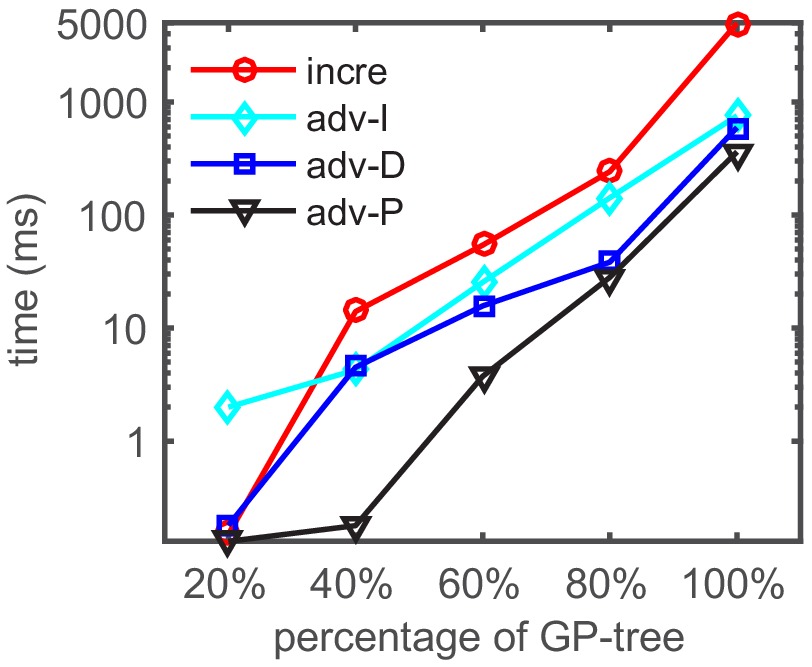}
  \end{minipage}
  &
  \begin{minipage}{4.175cm}
  \includegraphics[width=4.175cm]{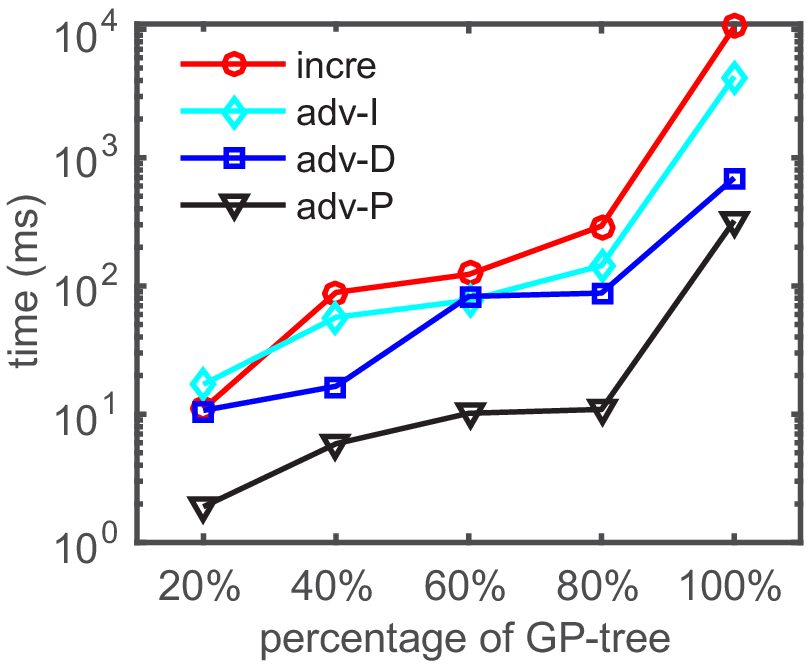}
  \end{minipage}
  &
   \begin{minipage}{4.175cm}
  \includegraphics[width=4.175cm]{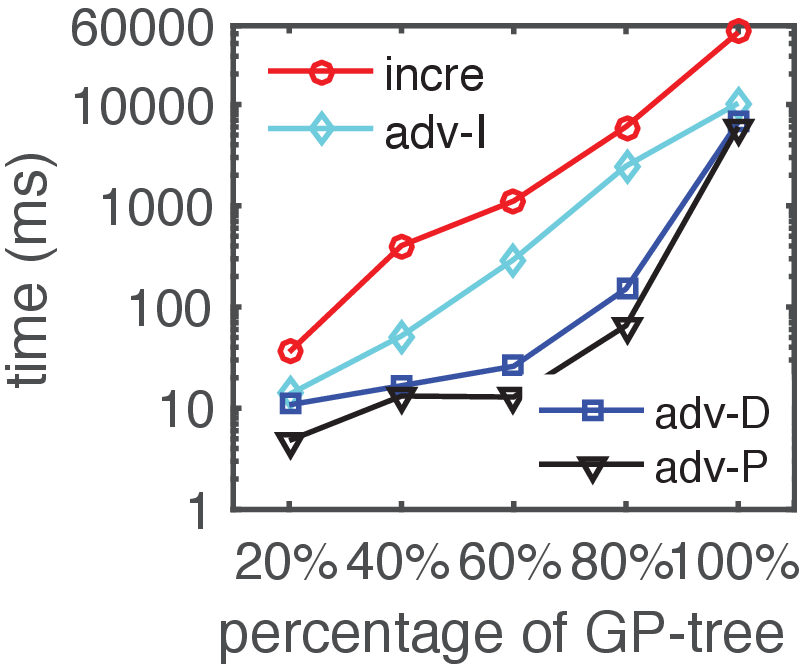}
  \end{minipage}
  \\
     (m) ACMDL (GP-tree scalab.)
  &
     (n) Flickr (GP-tree scalab.)
  &
     (o) pubMed (GP-tree scalab.)
   &
     (p) DBLP (GP-tree scalab.)
   \\ 
   \begin{minipage}{4.175cm}
  \includegraphics[width=4.175cm]{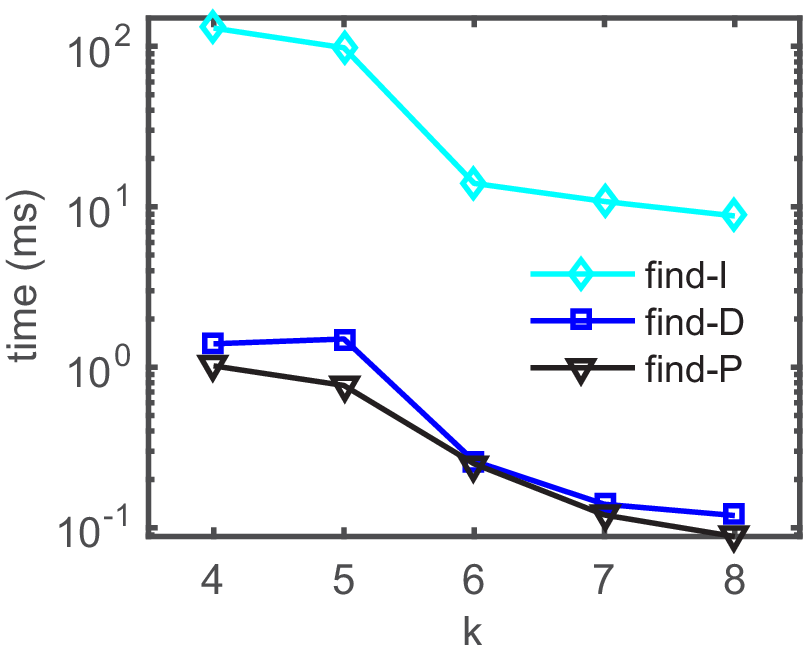}
  \end{minipage}
  &
  \begin{minipage}{4.175cm}
  \includegraphics[width=4.175cm]{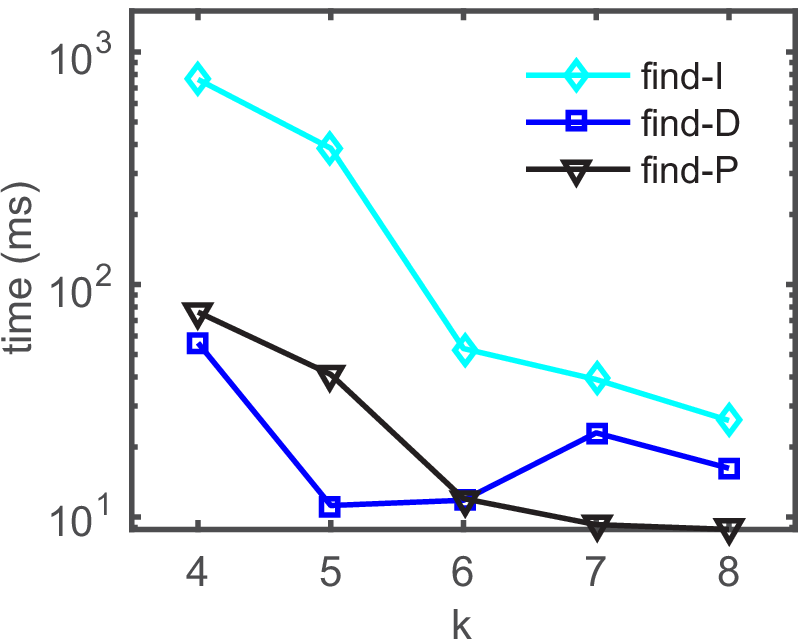}
  \end{minipage}
  &
  \begin{minipage}{4.175cm}
  \includegraphics[width=4.175cm]{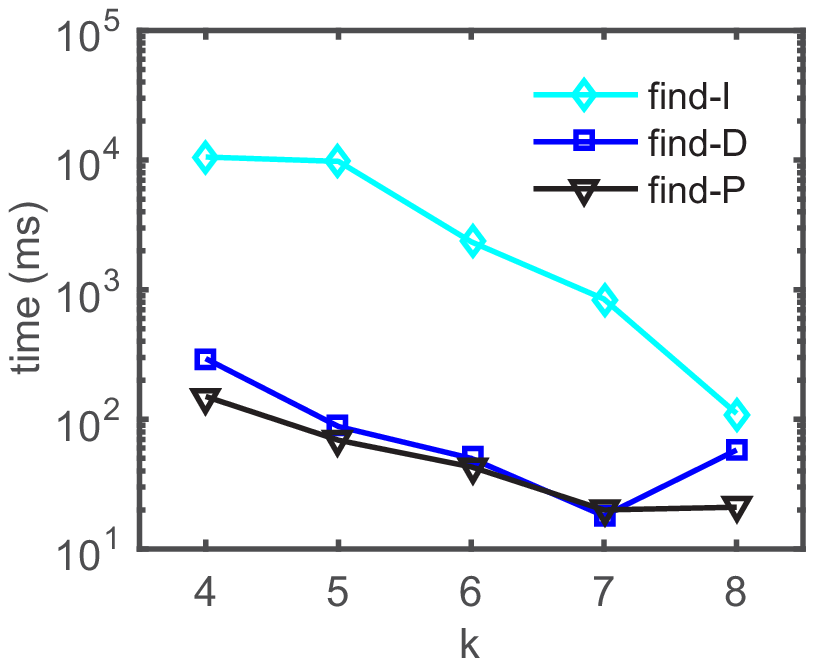}
  \end{minipage}
  &
  \begin{minipage}{4.175cm}
  \includegraphics[width=4.175cm]{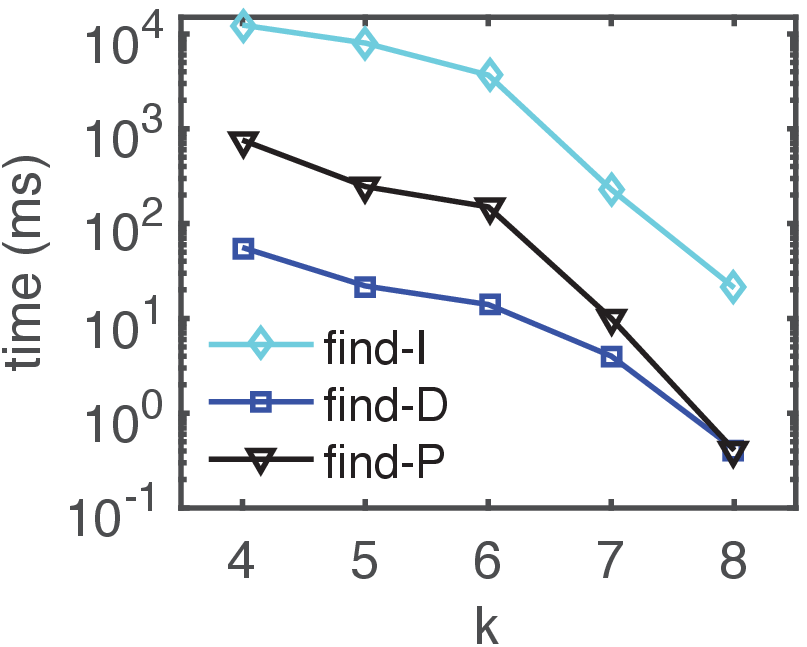}
  \end{minipage}
  \\
    (q) ACMDL ({\tt find} functions)
  &
     (r) Flickr ({\tt find} functions)
  &
     (s) pubMed ({\tt find} functions)
   &
     (t) DBLP ({\tt find} functions)
\end{tabular}
\caption{Efficiency and scalability on ACMDL, Flickr, PubMed and DBLP datasets.}
\label{fig:exp-cs-tkde2}
\end{figure*}

\section{Conclusions and Future Directions}
\label{sec:conclusion}

In this paper,we study the online community search problem which exhibit both semantic and strucutre cohesiveness on large-scale profiled graphs. 
Given a vertex $q$ of a profiled graph $G$, we study the PCS problem, which aims to find profiled communities containing $q$. We firstly introduce a basic solution. To further accelarate the query efficiency, we develop an index and some index-based query algorithms accordingly. We evalute the algorithms on both real and synthetic datasets, and our experimental results demonstrate the effectiveness of PCS and the efficiency of our solutions. 

In the future, we will study other structure cohesiveness measures (e.g., $k$-truss and $k$-clique) and profile cohesiveness measures in the PCS definition. An potential extension of PCS is to explore how to relax the query condition and optimize our solutions. For instance, we can stimulate that each vertex of the targeted community has a semantic similarity with the query vertex $q$ to be at least $\beta$ ($\beta$ ≥ 0), where $\beta$ is a predefined threshold. Or we can try to relax the structure cohesiveness (e.g., the proportion of vertices in a community having degrees of at least $k$ is at least $\delta$ where parameter $\delta$ > 0). Another useful question is to examine how the directions of edges will affect the formation of an PC. For example, D-core [14], a concept extended from k-core for directed graphs, can be utilized to measure the structure cohesiveness, and develop algorithms that is similar to those of PCS. In practice, directed edges are more common in many real social networks. In addition, we will also investigate other problems that can be done on profiled graphs, such as communities search family of problems, or labeled graph search problems. We will study how other graph pattern matching techniques can be extended to find PCs on profiled graphs and how to automatically generate a meaningful graph pattern that well reflects the ground-truth communities.

\vspace{-0.1in}
\section*{Acknowledgment}
Reynold Cheng, Yixiang Fang, and Yankai Chen were supported by the Research Grants Council of Hong Kong (RGC Projects HKU 106150091, 17229116, 17205115) and the University of Hong Kong (Projects 104004572, 102009508 ,104004129).
Xiaojun Chen was supported by NSFC under Grant no. 61773268. Jie Zhang was supported by the MOE AcRF Tier 1 funding (M4011894.020) and the Telenor-NTU Joint R\&D funding. We thank the editors and reviewers for their insightful comments.

\small
\bibliographystyle{IEEEtran}
\bibliography{ref}

{\sffamily
\noindent \textbf {Yankai Chen}
received the M.sc degree from the University of Hong Kong in 2018 and B.Sc degree from Nanjing University in 2016. 
Currently, he is a Ph.D. student in the School of Computer Science and Engineering, Nanyang Technological University (NTU), under the supervision of A/Prof. Jie Zhang NTU. His research area is machine learning and artificial intelligence, especially in graph data mining and recommendation system.  

\noindent \textbf {Yixiang Fang} received the Ph.D. degree from the University of Hong Kong in 2017. Currently, he is a postdoc researcher in University of New South Wales, under the supervision of Prof. Xuemin Lin. His research interests focus on graph queries.

\noindent \textbf {Reynold Cheng} is an associate professor in the Department of Computer Science, University of Hong Kong. He has served as a PC member and reviewer for international conferences (e.g., SIGMOD, VLDB, ICDE and KDD) and journals (e.g., TKDE, TODS, VLDBJ and IS). Currently, he is an associate editor of TKDE journal, and on the EIC selection committee for TKDE. He is also a member of the IEEE and ACM.

\noindent \textbf {Yun Li} currently is a third year graduate student of Department of Computer Science and Technology in Nanjing University and a member of IIP Group, led by professor Chongjun Wang. She received the B.Sc degree from Nanjing University in 2016. Recently her research interests focus on machine learning, natural language process, spatial database and graph Database.

\noindent \textbf {Xiaojun Chen} is an assistant professor in Shenzhen University. He received the PhD degree from the Harbin Institute of Technology in 2011. His research interests include clustering and massive data mining. He is a member of the IEEE.

\noindent \textbf {Jie Zhang} is an Associate Professor of the School of Computer Science and Engineering, Nanyang Technological University. He obtained Ph.D. in Cheriton School of Computer Science from University of Waterloo in 2009. His research is in the general area of Artificial Intelligence and Multi-Agent Systems.
}

\end{document}